\newtheorem{theorem}{Theorem}
\newtheorem{lemma}{Lemma}
\newtheorem*{lemma*}{Lemma}
\begin{document}
\title{BanditLinQ: A Scalable Link Scheduling for
\\Dense D2D Networks with One-Bit Feedback}

\author{Daeun~Kim,~\IEEEmembership{Student~Member,~IEEE},~and~Namyoon~Lee,~\IEEEmembership{Senior~Member,~IEEE}

 \thanks{This work was supported in part by the National Research Foundation of Korea (NRF) Grant through the Ministry of Science and ICT (MSIT), Korea Government, under Grant 2022R1A5A1027646 and in part by the National Research Foundation of Korea (NRF) Grant funded by the Korea Government (MSIT), under Grant RS-2023-00208552. }
\thanks{Daeun Kim is with the Department of Electrical Engineering, Pohang University of Science and Technology (POSTECH), Pohang 37673, South Korea  (e-mail: daeun.kim@postech.ac.kr).}
	\thanks{Namyoon Lee is with the School of Electrical Engineering, Korea University, Seoul 02841, South Korea (e-mail: namyoon@korea.ac.kr).}

}

  \maketitle

\begin{abstract}

  This paper addresses cooperative link scheduling problems for base station (BS) aided device-to-device (D2D) communications using limited channel state information (CSI) at BS. We first derive the analytical form of ergodic sum-spectral efficiency as a function of network parameters, assuming statistical CSI at the BS. However, the optimal link scheduling, which maximizes the ergodic sum-spectral efficiency, becomes computationally infeasible when network density increases. To overcome this challenge, we present a low-complexity link scheduling algorithm that divides the D2D network into sub-networks and identifies the optimal link scheduling strategy per sub-network. Furthermore, we consider the scenario when the statistical CSI is not available to the BS. In such cases, we propose a quasi-optimal scalable link scheduling algorithm that utilizes one-bit feedback information from D2D receivers. The algorithm clusters the links and applies the UCB algorithm per cluster using the collected one-bit feedback information. We highlight that even with reduced scheduling complexity, the proposed algorithm identifies a link scheduling action that ensures optimality within a constant throughput gap. We also demonstrate through simulations that the proposed algorithm achieves higher sum-spectral efficiency than the existing link scheduling algorithms, even without explicit CSI or network parameters knowledge.

\end{abstract}

 \begin{IEEEkeywords}
Device-to-device communications, Narrowband IoT (NB-IoT), link scheduling, interference management, multi-armed bandit (MAB).
\end{IEEEkeywords}

\section{Introduction}

\IEEEPARstart{T}{he} exponential growth of Internet of Things (IoT) devices, along with various communication types, has increased the demand for high-speed connectivity and data rates \cite{akpakwu2017}. Additionally, the low power consumption, low cost, and security requirements of IoT devices require novel techniques. To address these demands, Low Power Wide Area (LPWA) technologies, such as Narrowband IoT (NB-IoT), have received significant attention \cite{akpakwu2017,rastogi2020,migabo2020}. NB-IoT, as a cellular LPWA technology, is expected to reduce deployment costs by allowing for deployment within existing infrastructure and available spectrum \cite{rico2016}. However, the poor quality of the link between NB-IoT user equipment and the base station (BS) can lead to issues such as low reliability and limited coverage \cite{gbadamosi2022}. A potential solution to address these challenges is device-to-device (D2D) communication, which enables direct communication between devices \cite{orsino2017, Tehrani2014}. D2D communication can reduce the traffic and computing burden of the BS, improve spectral efficiency, and network coverage by utilizing better direct links between devices instead of transmitting via the BS \cite{Feng2014,Zhang2017}.

The effective spectrum sharing of the D2D communication networks has become increasingly important due to the growing number of IoT devices, which can enhance throughput and coverage compared to traditional cellular networks \cite{Huynh2016}. However, this task presents a significant challenge for interference management. Fortunately, it is possible to use centralized interference management approaches thanks to the capability of the BS coordination in NB-IoT networks. To mitigate interference optimally in a centralized manner, the BS should obtain global channel state information (CSI) from dense D2D networks, including the D2D network topology and channel realizations of all links. Nonetheless, acquiring such CSI accurately and timely is infeasible in practice by the limited capacity of the feedback channel. In this paper, we propose efficient D2D link scheduling techniques using limited CSI under a centralized BS coordination for dense D2D communication networks. In particular, we shall show that it is possible to increase the throughput of D2D networks as increasing the density of D2D links with one-bit feedback information per D2D link.

\subsection{Related Works}

Interference management techniques for general interference networks have been proposed in \cite{Jafar2011} which require global and instantaneous channel state information at transmitter (CSIT). Especially for D2D networks, considerable research has focused on managing interference through mode selection, power control, and resource allocation. Many centralized power control or resource allocation techniques, such as those in \cite{Lee2015,Yin2016,Maghsudi2015,Yu2011,Jiang2018,Naqvi2018,Takshi2018,Khuntia2019,Gao2019}, relied on global CSI. In \cite{Maghsudi2015}, game and graph theory were used to handle joint channel allocation and power control problems. Resource allocation and power control methods that maximize system throughput were proposed in \cite{Jiang2018,Naqvi2018,Gao2019}. However, in dense networks, these centralized methods required extensive and instantaneous signaling exchanges to acquire accurate and timely CSI. 

To address this issue, interference management and resource allocation techniques that rely on imperfect CSIT were proposed in \cite{Tse2012,Lee2014,Jafar2014,Naderializadeh2015,Doumiati2019,Wang2017,Li2020,Xiao2020}. Specifically, interference alignment methods were proposed under the assumption of delayed CSIT \cite{Tse2012,Lee2014}. Further, the topological interference management (TIM) techniques were proposed in \cite{Jafar2014,Naderializadeh2015,Doumiati2019}, which use imperfect CSIT comprised of the connectivity information. However, the optimization problem for solving the TIM problem is usually NP-hard due to the non-convex objective function. In \cite{Wang2017,Li2020,Xiao2020}, the resource allocation problem with statistical CSI was addressed for D2D communications. However, the statistical CSI, which is obtained with long-term channel observation and statistical fitting, becomes unavailable if the environment changes.

Distributed interference mitigation techniques have also been considered as effective approaches for enhancing the throughput of D2D communication networks. Such distributed techniques require relatively less CSI compared to the centralized methods \cite{Yin2016,Geng2015,Wu2013,Navid2014,Lee2015,Lee2015_2}.
For instance, distributed link scheduling algorithm based on the interference level to and from each others' links was proposed in \cite{Wu2013} as an effective link scheduling solution for a large number of links. To improve the sum-spectral efficiency, information-theoretically optimal D2D link scheduling method was proposed in \cite{Navid2014}, which leverages the optimality condition of treating interference as noise in interference networks \cite{Geng2015}. Further, in \cite{Lee2015}, an optimal distributed on-off power control method was presented, which maximizes the sum rate of D2D links. In addition, advanced interference cancellation techniques were used to improve the spectral efficiency of D2D communication networks \cite{Lee2016}. However, these distributed approaches cannot fully harness the capability of the BS coordination in NB-IoT networks.



Machine learning (ML) has emerged as a promising solution for interference management problems that don't necessarily require explicit channel state information (CSI) or network information. One particular framework that has gained considerable attention for real-time interference management is the multi-armed bandit (MAB) framework, which is a type of reinforcement learning (RL). Several studies \cite{Begashaw2016, Zhang2016, Kim2020, Kim2019} have proposed MAB-based algorithms for interference management. For instance, \cite{Begashaw2016} proposed a blind interference alignment algorithm with reconfigurable antennas that uses the upper confidence bound (UCB) algorithm \cite{Auer2002} for antenna mode selection. Similarly, \cite{Zhang2016} proposed a distributed beamforming approach for multicell interference networks that adjusts the signal-to-interference-plus-noise ratio (SINR) targets at users using the UCB algorithm. Although \cite{Kim2020} introduced a UCB-based multi-user precoding strategy for frequency-division duplexing (FDD) massive multiple-input multiple-output (MIMO) downlink systems, the algorithm's fast-exploration strategies do not guarantee optimal performance. Furthermore, \cite{Kim2019} proposed a MAB-based transmission coordination algorithm for heterogeneous ultra-dense networks that reduces computational overhead by introducing a new interference metric and optimizing it. Although these studies have proposed methods for interference management without global CSI, they still suffer from computational complexity issues and have not considered scalable centralized interference management with one-bit feedback information.

The resource allocation challenges in D2D networks have been tackled using MAB based frameworks in various studies \cite{Hashima2022, Ortiz2019,Gai2010, hakami2022}. In \cite{Hashima2022}, a distributed subband-selection algorithm was proposed, which reduces computational complexity and does not require complete network information. To solve combinatorial problems efficiently, combinatorial MAB (CMAB) based resource allocation approaches were suggested in \cite{Ortiz2019, Gai2010, hakami2022}. In \cite{Ortiz2019}, the CMAB approach was used to deal with the combinatorial nature of mode selection and resource allocation problems, leveraging the naive sampling strategy (NS) in \cite{Ontan2013}. However, the selection of the super-arm leads to exponential growth of computational complexity with the number of D2D links. In \cite{Gai2010}, a matching-learning algorithm was proposed to address the multi-user channel allocation issue in cognitive radio networks. Similarly, \cite{hakami2022} presented a matching learning-based algorithm for D2D resource allocation without CSI, but the algorithm is expensive in each iteration. Despite these concerns, the ML-based approaches provide scalable solutions to the D2D resource allocation issue without global CSI.

\subsection{Contributions}
In this paper, we consider D2D link scheduling problem for D2D communications in ultra-dense NB-IoT networks, in which $K$ D2D transmitter and receiver pair communicate directly in the common spectrum. Our link scheduling problem focuses on maximizing the ergodic sum-spectral efficiency and sum-throughput using statistical CSI and one-bit CSI at the BS. The major contributions of this paper are summarized as follows.

\begin{itemize}
  \item  We begin by deriving an analytical expression for the ergodic sum-spectral efficiency of D2D networks when the BS has knowledge of statistical CSI, i.e., both the path-loss and the fading channel distributions of all D2D links. The ergodic sum-spectral efficiency is characterized as a function of several network parameters, including the path-loss exponent, link distances, fading parameters,  and link scheduling action. Leveraging the derived ergodic sum-spectral efficiency, we propose an optimal link scheduling action that maximizes the ergodic sum-spectral efficiency by solving a combinatorial optimization problem. However, finding the optimal scheduling action becomes computationally complex with the increasing number of D2D links, making it impractical for dense D2D networks. To address this issue, we present a low-complexity algorithm based on D2D clustering, which partitions the D2D network into sub-networks using a modified hierarchical clustering algorithm. Our clustering method selects D2D communication pairs that strongly interfere with each other while maintaining negligible inter-cluster interference. We then identify the optimal D2D link scheduling action per cluster that maximizes the ergodic sum-spectral efficiency averaged over inter-cluster interference distributions.

\item Next, we consider the case where the BS has only knowledge of ACK/NACK feedback information per D2D link, i.e., one-bit feedback. In this case, we present a quasi-optimal D2D link scheduling algorithm that maximizes the network sum-throughput. The proposed method consists of two phases: D2D network clustering and upper confidence bound (UCB) based link scheduling per cluster. We refer to this as \textit{BanditLinQ} algorithm.  In the initial phase, the BS forms multiple D2D sub-networks using the one-bit feedback information from the D2D receiver. In the second phase, BS adaptively selects the D2D link scheduling action per cluster to maximize the empirically computed average throughput. We prove that the proposed algorithm finds quasi-optimal cooperative scheduling action that ensures optimality within a constant-gap from the throughput attained by the optimal cooperative action. As a result, our algorithm is scalable to find the optimal scheduling action in a dense D2D network within a constant gap while reducing the scheduling complexity exponentially with the number of clusters. 

\item We compare the proposed D2D link scheduling algorithm with existing methods. Our simulation results demonstrate that the proposed D2D link scheduling algorithm outperforms existing algorithms in average sum-spectral efficiency. We verify that the gain is more pronounced as the density of D2D links increases.

\end{itemize}

\section{System Model and Problem Statement}
In this section, we first explain the BS assisted D2D communication networks and the channel models. Then, we define the D2D link scheduling problems according to the knowledge levels of CSI at the BS.

\subsection{Network and Channel Model} We consider a BS-aided D2D communication network, where $K$ D2D transmitters and their corresponding receivers communicate directly, sharing the same frequency and time resources. The transmitters are uniformly distributed within a cell. All D2D transmitters and receivers are equipped with a single antenna. We assume that the link scheduling of the D2D pairs is globally controlled by the BS, which sends scheduling commands to the D2D transmitters. We also assume that the BS serves the cellular users using orthogonal time-frequency resources to the D2D communications, i.e., no interference between cellular and D2D links. This assumption ensures efficient and reliable cellular and D2D communications as in NB-IoT applications  \cite{akpakwu2017,rastogi2020,migabo2020}.

We define the path-loss model from the $\ell$th transmitter to the $k$th receiver as $d_{k,\ell}^{-\frac{\beta_{k,\ell}}{2}}$, where $d_{k,\ell}$ denotes the distance between them and $\beta_{k,\ell}$ is a path-loss exponent. Additionally, we use $h_{k,\ell}[t]$ to represent the complex channel coefficient from the $\ell$th transmitter to the $k$th receiver at fading block $t$.

We denote the fading from D2D transmitter $k$ to receiver $k$ by $h_{k,k}[t]$. To model the randomness of the small-scale channel fading process of the desired link, we use the Nakagami-$m$ distribution as
\begin{align}
    f_{h_{k,k}[t]}(x)=\frac{2m^m}{\Gamma(m)}x^{2m-1}\exp\left(-mx^2\right),
\end{align}
where $\Gamma(m)=(m-1)!$ for every positive integer $m$. This distribution is chosen to capture the line-of-sight (LOS) effects, and it allows us to select an appropriate value for the parameter $m$. Unlike the desired link, we model the channel fading for the interfering links, i.e., $h_{k,\ell}[t]$ for $k\neq \ell$, as the Rayleigh distribution, which is a special case of the  Nakagami-$m$ when $m=1$. We assume that a block fading process, where the channel coefficient $h_{k,\ell}[t]$ changes independently in every fading coherence time duration $T_c$, while it does not change within $T_c$. In addition, the path-loss $d_{k,\ell}^{-\frac{\beta_{k,\ell}}{2}}$, i.e., the network topology, remains constant over multiple fading blocks.

Using the defined network and channel models, the received signal at the D2D receiver $k$ in fading block $t$ can be expressed as
\begin{align}
    y_k[t] \!=\! h_{k,k}[t]d_{k,k}^{-\frac{\beta_{k,k}}{2}}s_k[t] \!+\! \sum_{\ell=1, \ell \ne k}^{K} h_{k,\ell}[t]d_{k,\ell}^{-\frac{\beta_{k,\ell}}{2}}s_{\ell}[t] + n_k[t], 
\end{align}
where $s_k[t]$ is the transmit symbol of transmitter $k$, which is assumed to be drawn from $\mathcal{CN}(0,P)$ to meet the average power constraint $\mathbb{E}[|s_k[t]|^2] = P$, and $n_k[t]$ denotes the additive Gaussian noise, i.e., $n_k[t] \sim \mathcal{CN}(0,\sigma^2)$. 

\begin{figure*}[!t]
\begin{center}
\includegraphics[width=13cm]{ 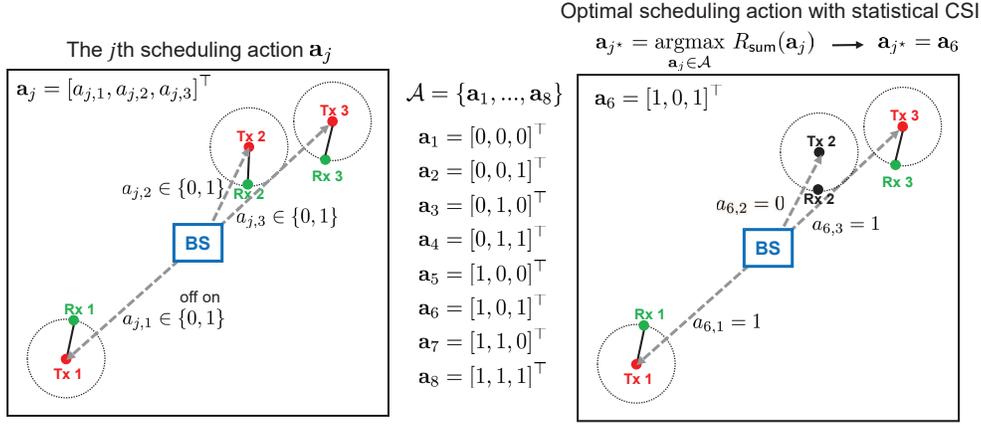}
\end{center}
\caption{An illustration of the link scheduling actions and optimal cooperative link scheduling action when $K=3$.}\label{fig:optimal}
\end{figure*}

\subsection{Cooperative D2D Link Scheduling Problem }
We explain cooperative D2D link scheduling problems with different CSI assumptions at the BS. We assume that a BS collects global information from $K$ D2D receivers to determine which links to activate by selecting the
scheduling action for each link from the on-off set $\{0, 1\}$. Using the global information, the BS selects a cooperative scheduling action to maximize utility functions. 

Since the BS selects scheduling action for each link from the on-off set $\{0, 1\}$, a total of $J=2^K$ possible cooperative scheduling actions exist. We define the $j$th cooperative scheduling action vector as ${\bf a}_j=[a_{j,1},\ldots, a_{j,K}]^{\top}\in\{0,1\}^{K}$, where $a_{j,k}\in\{0,1\}$ is the scheduling action of the $k$th link when the cooperative scheduling action is ${\bf a}_j$. By collecting all possible scheduling actions, we define the scheduling action space as
 \begin{align}
     \mathcal{A}=\{{\bf a}_1,{\bf a}_2,\ldots,{\bf a}_J\}.
 \end{align}
The BS needs to identify the optimal scheduling action to maximize network utility functions, which are based on the available CSI knowledge levels at the BS.

\subsubsection{Instantaneous Sum-Spectral Efficiency Maximization} 
We begin by assuming that the BS has access to perfect knowledge of the distances, path-loss exponents, and the channel realization $h_{k,\ell}[t]$ of all links for every fading block. With this information, the BS is able to compute the instantaneous rate $R_{k}[t]({\bf a}_j)$ for the $k$th link, taking into account the cooperative scheduling action $\mathbf{a}_j$, i.e.,
\begin{align}
    R_{k}[t]({\bf a}_j) \! =\! \log_2 \!\left(\!
 1\!+\! \frac{\left|h_{k,k}[t]\right|^2d_{k,k}^{-\beta_{k,k}}{a}_{j,k} }{\sum_{\ell\neq k}\left|{h}_{k,\ell}[t]\right|^2 d_{k,\ell}^{-\beta_{k,\ell}}{a}_{j,\ell} +\frac{1}{\sf snr}} \!\right), \label{eq:SE1}
\end{align}
where ${\sf snr}=\frac{P}{\sigma^2}$ is the signal-to-noise ratio (SNR) of the system.
From \eqref{eq:SE1}, we define the instantaneous sum-spectral efficiency as a utility function: 
\begin{align}
    R_{\sf sum}[t] ({\bf a}_j) =  \sum_{k=1}^KR_{k}[t]({\bf a}_j). \label{eq:ISE}
\end{align}
This rate calculation is crucial in determining the optimal scheduling strategy for maximizing the instantaneous sum-spectral efficiency as 
\begin{align}
    \mathbf{a}_{j^\star}[t]  = \underset{\mathbf{a}_j \in \mathcal{A}}{\mathrm{argmax}}~  R_{\sf sum}[t] ({\bf a}_j).
\end{align}
This centralized D2D link scheduling problem has been extensively studied in literature \cite{Lee2015,Yin2016,Maghsudi2015,Yu2011,Jiang2018,Naqvi2018,Takshi2018,Khuntia2019,Gao2019}. Unfortunately, acquiring perfect and timely CSI at the BS is infeasible in practice, we shall not focus on this problem in this paper.

\subsubsection{Ergodic Spectral Efficiency Maximization with Statistical CSI} 

The BS can leverage the long-term CSI of the D2D network, such as the path-loss $d_{k,\ell}^{-\frac{\beta_{k,\ell}}{2}}$. However, obtaining accurate and timely realizations of fadings $h_{k,\ell}[t]$ can be challenging due to the mobility of D2D transceivers and the limited feedback channel capacity. Instead, the BS can exploit information on the statistical distributions of $h_{k,\ell}[t]$ to maximize a network utility function. Using this information, the BS can calculate the network utility function in terms of the ergodic spectral efficiency, which is computed by taking the average on the instantaneous spectral efficiency in \eqref{eq:ISE} over the fading distributions, i.e.,
\begin{align}
    & R_{k}(\!{\bf a}_j\!)\!=\!\mathbb{E}_{\left|{h}_{k,\ell}[t]\right|^2} \!\!\left[\! \log_2\!\!\left(\!
 1\!+\! \frac{\left|h_{k,k}[t]\right|^2d_{k,k}^{-\beta_{k,k}}{a}_{j,k} }{\sum_{\ell\neq k}\!\left|{h}_{k,\ell}[t]\right|^2 \! d_{k,\ell}^{-\beta_{k,\ell}}{a}_{j,\ell}\! +\!\frac{1}{\sf snr}} \!\right)\!\right].
\end{align}
As a result, the ergodic sum-spectral efficiency for the cooperative scheduling action $\mathbf{a}_j$ is given by
\begin{align}
     R_{\sf sum}({\bf a}_j) =  \sum_{k=1}^K R_{k}({\bf a}_j).
\end{align}
The optimal cooperative scheduling action $\mathbf{a}_{j}$ is found by solving the following optimization problem: 
\begin{align}
    \mathbf{a}_{j^\star}  = \underset{\mathbf{a}_j \in \mathcal{A}}{\mathrm{argmax}} ~R_{\sf sum}({\bf a}_j). \label{eq:ESE_opt}
\end{align}
{As an example of $K = 3$, the link scheduling actions and optimal link cooperative scheduling action are illustrated in Fig. \ref{fig:optimal}.}
One challenge to solving this optimization problem is that we need to compute the ergodic sum-spectral efficiency in an analytical form. Additional challenge is that \eqref{eq:ESE_opt} is a combinatorial optimization problem, implying that the computational complexity increases with the number of the D2D pairs $K$. As a result, we need to develop a computational-efficiency algorithm that finds quasi-optimal scheduling action, which will be explained in Section III. 


\subsubsection{Throughput Maximization with One-Bit Feedback}
The scenario in which the BS has no knowledge of the long- and short-term fadings due to the high mobility of the D2D transceivers is of practical relevance. In  this scenario, the BS can obtain one-bit feedback information from D2D receivers. Specifically, each D2D receiver sends an acknowledgment (ACK) signal to the BS if its instantaneous spectral efficiency with scheduling action ${\bf a}_j$ meets the target rate $r_k$, i.e., $R_{k}[t]({\bf a}_j)>r_k$. Otherwise, it sends a negative-acknowledgment (NACK) signal.

Using this feedback information, the BS can obtain the instantaneous sum-throughput ${\bar R}_{\sf sum}[t]({\bf a}_j)$. The sum-throughput is a measure of the total data rate that can be achieved by all the D2D links that are active in the current time slot $t$. 
The instantaneous sum-throughput is computed by summing the spectral efficiencies of all the active D2D links that have sent an ACK signal as
\begin{align}
    {\bar R}_{\sf sum}[t]({\bf a}_j) = \sum_{k=1}^{K} r_k \mathbf{1}\{R_{k}[t]({\bf a}_j)>r_k\}, \label{eq:one_bit_uility}
\end{align}
where ${\bf 1}_{\{{E}\}}$ is an indicator function that yields one if the event set $E$ is true; otherwise, it is zero. This enables the BS to determine the overall performance of the D2D network and adjust its scheduling actions accordingly. Therefore, the goal is to find the sequence of the optimal cooperative scheduling actions $\pi[t] \in \{1,2,\ldots,J\}$ that maximizes the empirically averaged sum-throughput during the finite $T$ fading blocks as follows:
\begin{align}
    \left( {\pi}^\star[1],\ldots, {\pi}^\star[T] \right)= \underset{\pi[1],\pi[2],\ldots, \pi[T]\in [J]^T}{\mathrm{argmax}} \frac{1}{T}\sum_{t=1}^{T} {\bar R}_{\sf sum}[t]({\bf a}_{\pi[t]}).
\end{align}
The task of determining the optimal sequence of cooperative scheduling actions presents a significant challenge compared to solving the optimization problem presented in \eqref{eq:ESE_opt}. One of the reasons for this is that the BS lacks knowledge of the utility function for the given scheduling action, i.e., ${\bar R}_{\sf sum}[t]({\bf a}_j)$. Instead, the BS must rely on estimating the sum-throughput using only one-bit ACK/NACK feedback provided by the receivers. This estimation method further adds to the complexity of the optimization problem. Moreover, finding a computationally-efficient algorithm that determines the sequence of cooperative scheduling actions is also necessary. To address this challenge, in Section IV, we introduce a quasi-optimal algorithm based on the MAB framework. 




\section{Optimal Link Scheduling with Statistical CSI} \label{sec:opt}
In this section, we propose the optimal cooperative scheduling action that maximizes ergodic sum-spectral efficiency when the BS has the statistical CSI.

We first introduce a lemma that helps to compute the ergodic sum-spectral efficiency in an analytical form.

\begin{lemma}
Let $X$ and $Y_i$ for $i=1,\ldots, N$ be $N+1$ independent and non-negative random variables. Then,
\begin{align}
&\mathbb{E}\left[\ln \left(1+\frac{X}{\sum_{i=1}^N Y_i+ \frac{1}{{\sf snr}}}\right)\right] \nonumber\\&=\int_{0}^{\infty}\frac{e^{-\frac{z}{{\sf snr}}}}{z} \left(1-\mathbb{E}\left[e^{-zX}\right]\right) \prod_{i=1}^N\mathbb{E}\left[e^{-zY_i}\right] {\rm d}z.
\end{align}
\end{lemma}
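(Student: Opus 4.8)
The plan is to express the logarithm as an integral and then interchange expectation with integration using Tonelli's theorem (everything in sight is nonnegative). The key identity I would use is the Frullani-type representation
\begin{align}
\ln(1+u) = \int_0^\infty \frac{e^{-z} - e^{-z(1+u)}}{z}\,{\rm d}z = \int_0^\infty \frac{e^{-z}}{z}\left(1 - e^{-zu}\right){\rm d}z,
\end{align}
valid for $u \ge 0$. Applying this with $u = \frac{X}{S}$ where $S = \sum_{i=1}^N Y_i + \frac{1}{\sf snr}$ is awkward because $u$ is itself a ratio of random variables, so instead I would first condition on $S$ (equivalently, multiply and divide inside the exponent). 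Writing $\ln\!\left(1 + \frac{X}{S}\right)$ and substituting $z \mapsto zS$ in the Frullani integral gives
\begin{align}
\ln\!\left(1 + \frac{X}{S}\right) = \int_0^\infty \frac{e^{-zS}}{z}\left(1 - e^{-zX}\right){\rm d}z,
\end{align}
after changing variables; the Jacobian factor $S$ cancels against the $1/(zS)$, which is exactly why this substitution is the right move.

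Next I would take expectations of both sides. Since $X, Y_1, \ldots, Y_N \ge 0$, the integrand $\frac{e^{-zS}}{z}(1 - e^{-zX})$ is nonnegative for all $z > 0$, so Tonelli lets me swap $\mathbb{E}[\cdot]$ and $\int_0^\infty \cdot\,{\rm d}z$ without any integrability hypothesis beyond nonnegativity. This yields
\begin{align}
\mathbb{E}\!\left[\ln\!\left(1 + \frac{X}{S}\right)\right] = \int_0^\infty \frac{1}{z}\,\mathbb{E}\!\left[e^{-zS}\left(1 - e^{-zX}\right)\right]{\rm d}z.
\end{align}
Then I expand $S = \sum_i Y_i + \frac{1}{\sf snr}$ so that $e^{-zS} = e^{-z/{\sf snr}}\prod_{i=1}^N e^{-zY_i}$, and use independence of $X, Y_1, \ldots, Y_N$ to factor the expectation: $\mathbb{E}\big[e^{-z/{\sf snr}}\prod_i e^{-zY_i}(1 - e^{-zX})\big] = e^{-z/{\sf snr}}\big(1 - \mathbb{E}[e^{-zX}]\big)\prod_{i=1}^N \mathbb{E}[e^{-zY_i}]$. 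Substituting this back into the integral gives exactly the claimed formula.

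The main obstacle — really the only subtle point — is justifying the interchange of expectation and integration and confirming the Frullani integral converges near $z = 0$. Near $z=0$ the integrand behaves like $\frac{1}{z}\cdot e^{-zS}\cdot zX \sim X$, which is finite, and near $z = \infty$ it decays like $e^{-z/{\sf snr}}$, so the inner integral is finite almost surely; nonnegativity then makes Tonelli applicable directly, so no dominated-convergence argument or moment assumption is needed. I would also note that the statement is for the natural logarithm $\ln$, whereas the spectral-efficiency expressions use $\log_2$; the conversion is just the constant factor $1/\ln 2$, applied when the lemma is invoked.
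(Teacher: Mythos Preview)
Your argument is correct. The paper does not actually supply a proof of this lemma; it simply cites Hamdi~(2010), so there is no in-paper derivation to compare against. Your Frullani-integral representation, the substitution $z\mapsto zS$ to absorb the random denominator into the exponent, Tonelli via nonnegativity, and factorisation by independence together constitute exactly the standard proof from that reference, and your remarks on integrability near $z=0$ and $z=\infty$ are the right sanity checks.
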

\begin{proof}
    See \cite{Hamdi2010}.
\end{proof}

Utilizing Lemma 1, we can establish the analytical expression for the ergodic sum-spectral efficiency, which is stated in the following theorem.
\begin{theorem}
The ergodic sum-spectral efficiency with scheduling action ${\bf a}_j \in \mathcal{A}$ is given by  
\begin{align}
&R_{\sf sum} ({\bf a}_j) =\log_2{e}\sum_{k=1}^K  a_{j,k} \int_{0}^{\infty}\!\!\frac{e^{-\frac{z}{{\sf snr}}}}{z} \!\left(\!\!1-\!\frac{1}{\left(1+\frac{zd_{k,k}^{-{\beta}_{k,k}}}{m}\!\right)^{\!m}}\!\!\right) \nonumber\\ & ~~~~~~~~~~~~~~~~~~~~~~\cdot\prod_{\ell\neq k}^K \frac{1}{\left(1+{z d_{k,\ell}^{-{\beta}_{k,\ell}}}\right)^{a_{j,\ell}}}   {\rm d}z. \label{theorem1}
\end{align}
\begin{proof}
    See Appendix A.
\end{proof}




\end{theorem}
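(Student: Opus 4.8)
\emph{Proof proposal (Theorem~1).} The plan is to split $R_{\sf sum}$ into its per-link contributions, apply Lemma~1 to each one, and then evaluate the two Laplace transforms that arise from the prescribed fading statistics. Since $R_{\sf sum}({\bf a}_j) = \sum_{k=1}^{K} R_k({\bf a}_j)$ and $\log_2 u = (\log_2 e)\ln u$, I will fix a receiver index $k$ and write $R_k({\bf a}_j) = (\log_2 e)\,\mathbb{E}[\ln(1 + X/(\sum_{\ell\ne k} Y_\ell + 1/{\sf snr}))]$ with the identifications $X = a_{j,k}|h_{k,k}[t]|^2 d_{k,k}^{-\beta_{k,k}}$ and $Y_\ell = a_{j,\ell}|h_{k,\ell}[t]|^2 d_{k,\ell}^{-\beta_{k,\ell}}$ for $\ell \ne k$. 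Under the block-fading model the channel coefficients $\{h_{k,\ell}[t]\}_{\ell=1}^{K}$ are mutually independent, so $X$ together with the $N = K-1$ variables $\{Y_\ell\}_{\ell\ne k}$ are independent and non-negative, and Lemma~1 applies directly.

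Next I will compute the required transforms. A change of variable $g = x^2$ in the Nakagami-$m$ density shows that $|h_{k,k}[t]|^2$ is Gamma-distributed with shape $m$ and rate $m$, so that when $a_{j,k}=1$ one has $\mathbb{E}[e^{-zX}] = (1 + z d_{k,k}^{-\beta_{k,k}}/m)^{-m}$, while trivially $\mathbb{E}[e^{-zX}]=1$ when $a_{j,k}=0$; both cases are captured uniformly by
\begin{align}
1 - \mathbb{E}\!\left[e^{-zX}\right] = a_{j,k}\!\left(1 - \frac{1}{\left(1 + z d_{k,k}^{-\beta_{k,k}}/m\right)^{m}}\right). \nonumber
\end{align}
Since the interfering fades are Rayleigh ($m=1$), $|h_{k,\ell}[t]|^2$ is unit-mean exponential, hence $\mathbb{E}[e^{-zY_\ell}] = (1 + a_{j,\ell} z d_{k,\ell}^{-\beta_{k,\ell}})^{-1} = (1 + z d_{k,\ell}^{-\beta_{k,\ell}})^{-a_{j,\ell}}$, where the last equality uses $a_{j,\ell}\in\{0,1\}$.

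Substituting these into Lemma~1, factoring out the common $a_{j,k}$, summing over $k=1,\dots,K$, and restoring the overall $\log_2 e$ produces exactly \eqref{theorem1}. I will also remark that every integral is finite: near $z=0$ the integrand behaves like $(1-\mathbb{E}[e^{-zX}])/z \to \mathbb{E}[X] < \infty$, and the factor $e^{-z/{\sf snr}}$ forces decay as $z\to\infty$.

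The calculation is largely mechanical once Lemma~1 is invoked; the two places that need genuine care are (i) deriving the $\mathrm{Gamma}(m,m)$ law of the desired-link power from the squared-magnitude transformation of the Nakagami-$m$ density and correctly evaluating its Laplace transform, and (ii) bookkeeping the on--off variables $a_{j,k}$ and $a_{j,\ell}$ so that the single closed form holds simultaneously for all $2^{K}$ scheduling actions, in particular checking that a deactivated interferer contributes a neutral factor $1$ inside the product while a deactivated desired link contributes a vanishing term in the outer sum.
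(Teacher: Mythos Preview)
Your proposal is correct and follows essentially the same route as the paper's proof: apply Lemma~1 to each per-link term, evaluate the two Laplace transforms via the $\mathrm{Gamma}(m,m)$ law of $|h_{k,k}|^2$ and the exponential law of $|h_{k,\ell}|^2$, and absorb the binary $a_{j,k},a_{j,\ell}$ into the multiplicative factor and the exponent. Your extra remarks on integrability and on the $a_{j,\cdot}$ bookkeeping are more explicit than in the paper but do not change the argument.
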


Theorem 1 provides a formula for computing the ergodic sum-spectral efficiency in terms of several network parameters, including the SNR, the Nakagami fading parameter $m$, the long-term path loss of all links, denoted by $\left\{d_{k,\ell}^{-{\beta}_{k,\ell}}\right\}$ for $k,\ell\in \{1,\ldots,K\}$, and the scheduling action ${\bf a}_j\in \mathcal{A}$. For a given set of network parameters, we can identify the optimal scheduling action that maximizes the ergodic sum-spectral efficiency by solving the combinatorial optimization problem in \eqref{eq:ESE_opt}. However, the computational complexity of this optimization problem increases exponentially with the number of D2D links $K$, which makes it unsuitable for dense D2D networks. Therefore, we need to develop a low-complexity algorithm to solve this problem, which will be presented in the following subsection.

\subsection{Low-Complexity Link Scheduling Algorithm} \label{sec:low_comp}

In this subsection, we propose the low-complexity algorithm called L-QuasiOpt that finds quasi-optimal cooperative link scheduling action. The L-QuasiOpt algorithm finds the quasi-optimal cooperative link scheduling action within the short trials than $2^K$ by clustering the D2D links with path-loss knowledge and identifying the link scheduling action for each cluster. In each cluster, the link scheduling action is chosen to maximize the cluster utility function which is the ergodic sum-spectral efficiency averaged over inter-cluster interference distributions.

For the clustering strategy, we consider the well-known hierarchical clustering algorithm \cite{Hier2016} which clusters the links by using the pairwise distances between the links. Specifically, hierarchical clustering constructs a hierarchy of clusters by merging them iteratively based on their pairwise distances, generating a dendrogram that illustrates cluster relationships. Then, D2D links are assigned to clusters using the dendrogram structure. Since the BS has knowledge of path-loss $d_{k,\ell}^{-\frac{\beta_{k,\ell}}{2}}$, clustering techniques that require exact locations 
of D2D links, such as K-means clustering algorithm, is not applicable. In traditional interference management methods \cite{Jafar2014,Naderializadeh2015,Doumiati2019}, the sub-networks are constructed to minimize interference within each sub-network and use different frequency resources for each sub-networks to manage the inter-sub-networks interference. However, these approaches are not applicable when all D2D links share the same frequency and time resources. To address this challenge, we cluster the severely interfering D2D links into the same cluster, which increases intra-cluster interference but reduces inter-cluster interference. We then schedule the links within each cluster to manage the intra-cluster interference.

\begin{figure*}[!t]
\begin{center}
\includegraphics[width=13cm]{ 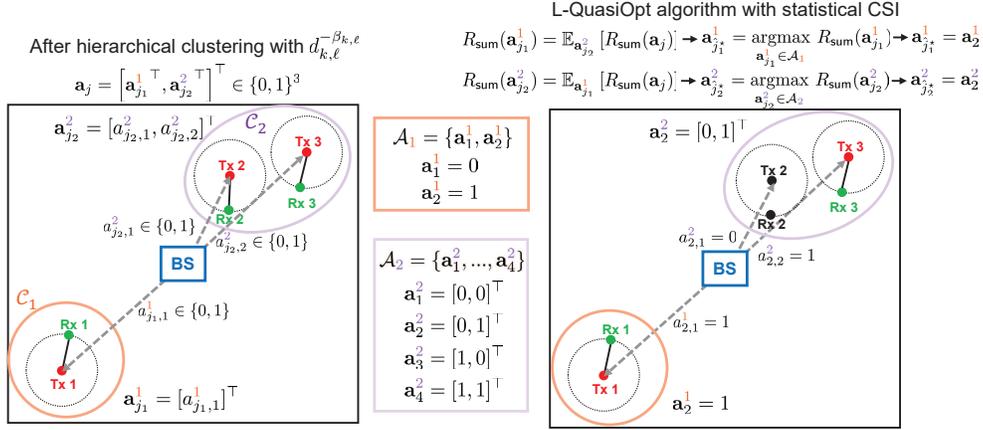}
\end{center}
\caption{An illustration of the L-QuasiOpt algorithm when $K=3$ and $C=2$.}\label{fig:L-QuasiOpt}
\end{figure*}

The BS first clusters the $K$ links into $C$ clusters $\mathcal{C}_1,\mathcal{C}_2,\ldots,\mathcal{C}_C$, by applying hierarchical clustering algorithm with path-loss knowledge $d_{k,\ell}^{-\frac{\beta_{k,\ell}}{2}}$.
Let $K_c$ be the number of D2D links in the cluster $\mathcal{C}_c$, where $\sum_{c=1}^C K_c = K$, and let the set of users in the cluster $\mathcal{C}_c$ be $\mathcal{K}_c$. Then, we define a cluster scheduling action vector of cluster $\mathcal{C}_c$ as $\mathbf{a}_{j_c}^{c}=\left[a_{j_c,1}^c,\ldots,a_{j_c,K_c}^{c} \right]^{\top} \in \{0,1\}^{K_c}$ where $j_c \in \{1,\ldots,2^{K_c}\}$ is a scheduling action index of cluster $\mathcal{C}_c$. By collecting all possible scheduling actions of cluster $\mathcal{C}_c$, we define the scheduling action space of cluster $\mathcal{C}_c$ as
\begin{align}
    \mathcal{A}_c = \bigl\{ \mathbf{a}_{1}^c,\mathbf{a}_{2}^c,\ldots, \mathbf{a}_{2^{K_c}}^c  \bigr\}.
\end{align}
Further, by concatenating the cluster scheduling action $\mathbf{a}_{j_c}^{c}$ for all $c\in \{1,2,\ldots,C\}$, the cooperative scheduling action is expressed as
\begin{align}
    \mathbf{a}_{j} = \left[{\mathbf{a}_{j_{1}}^{1}}^{\top},{\mathbf{a}_{j_{2}}^{2}}^{\top},\ldots,{\mathbf{a}_{j_C}^{C}}^{\top}\right]^{\top}\in\{0,1\}^K.
\end{align}
Then, we define the cluster utility function of cluster $\mathcal{C}_c$ for the cluster scheduling action $\mathbf{a}_{j_c}^{c}$ as

\begin{align}
R_{\sf sum} (\mathbf{a}_{j_c}^{c}) = \mathbb{E}_{\mathbf{a}_{j_{s}}^{s}} \left[ R_{\sf sum} ({\bf a}_j )\right],
\end{align}
which is the ergodic sum-spectral efficiency averaged over interference distributions of other clusters $\mathcal{C}_s$ for $s \ne c$. Since obtaining the interference distribution of other clusters is unrealistic, we assume that the actions of other clusters are Bernoulli distributed as Bernoulli$(0.5)$, i.e., $\mathbb{E}[a_{j_{s},i}^s]=0.5$ for $s \ne c$ and $i \in \{1,\ldots,K_{s}\}$. {Then, the cluster utility function $R_{\sf sum} (\mathbf{a}_{j_c}^{c})$ is computed as
\begin{align}
&R_{\sf sum} (\mathbf{a}_{j_c}^c)  = \mathbb{E}_{\mathbf{a}_{j_s}^s} \left[ R_{\sf sum} ({\bf a}_j )\right] \nonumber \\ &= \mathbb{E}_{a_{j,k},a_{j,\ell}} \Bigg[ \log_2{e}\sum_{k=1}^K  a_{j,k}  \int_{0}^{\infty}\!\!\frac{e^{-\frac{z}{{\sf snr}}}}{z} \! \nonumber\\ &~~~~~~\cdot   \left(\!\!1-\!{\left(1+\frac{zd_{k,k}^{-{\beta}_{k,k}}}{m}\!\right)^{-m}}\!\right) \prod_{\ell\neq k}^K \frac{1}{1+{z d_{k,\ell}^{-{\beta}_{k,\ell}}}a_{j,\ell}}   {\rm d}z \Bigg] \nonumber \\
&\overset{(a)}{=} \log_2{e}\sum_{k=1}^K \mathbb{E}_{a_{j,k}}\left[a_{j,k}\right] \int_{0}^{\infty}\!\!\frac{e^{-\frac{z}{{\sf snr}}}}{z} \!\left(\!\!1-\!{\left(1+\frac{zd_{k,k}^{-{\beta}_{k,k}}}{m}\!\right)^{-m}}\!\right)  \nonumber \\&~~~~~~~~~~~~~~~~~~~~~~~~~\prod_{\ell\neq k}^K \mathbb{E}_{a_{j,\ell}} \left[ \frac{1}{1+{z d_{k,\ell}^{-{\beta}_{k,\ell}}}a_{j,\ell}} \right]  {\rm d}z, \label{eq:cluster_uf}
\end{align}
where (a) follows from the independence of $a_{j,k}$ and $a_{j,\ell}$ for $\ell \ne k$. Since the distributions of actions of other clusters $\mathcal{C}_s$ for $s\ne c$ are assumed to be Bernouilli(0.5), the expectations in \eqref{eq:cluster_uf} are obtained by
\begin{align}    \mathbb{E}_{a_{j,k}}\left[a_{j,k}\right] = 
    \begin{cases}
        a_{j,k} & \text{if}~ k\in \mathcal{K}_c \\
        0.5 & \text{if}~ k \notin \mathcal{K}_c,
    \end{cases}        
    \end{align}
and
\begin{align}
    \mathbb{E}_{a_{j,\ell}} \left[ \frac{1}{1+{z d_{k,\ell}^{-{\beta}_{k,\ell}}}a_{j,\ell} } \right] =
    \begin{cases}
        \frac{1}{1+{z d_{k,\ell}^{-{\beta}_{k,\ell}}}a_{j,\ell}} & \text{if}~\ell \in \mathcal{K}_c \\ 
        \frac{2+zd_{k,\ell}^{-\beta_{j,\ell}}}{2+2zd_{k,\ell}^{-\beta_{j,\ell}}} &\text{if}~\ell \notin \mathcal{K}_c
    \end{cases}.
\end{align}}

\begin{algorithm} [t]
	\caption{L-QuasiOpt algorithm.}\label{alg:lowcomp}
	{\small{\begin{algorithmic}[1]
    \STATE Link clustering using hierarchical clustering algorithm with $d_{k,\ell}^{-\beta_{k,\ell}}$.
    \FOR{$c \in \{1,\ldots,C\}$}
        \FOR{$j_c \in \{1,\ldots,2^{K_c}\}$}
            \STATE Compute the cluster utility function $R_{\sf sum} (\mathbf{a}_{j_c}^c)$: $R_{\sf sum} (\mathbf{a}_{j_c}^c) = \mathbb{E}_{\mathbf{a}_{j_s}^s} \left[ R_{\sf sum} ({\bf a}_j )\right]$.
        \ENDFOR
        \STATE Select cluster link scheduling action $\mathbf{a}_{\hat{j}_c^\star}^c$ that maximizes the cluster utility function: \\ $\mathbf{a}_{\hat{j}_c^\star}^c = \underset{\mathbf{a}_{j_c}^c \in \mathcal{A}_c}{\mathrm{argmax}} ~ R_{\sf sum} (\mathbf{a}_{j_c}^c)$
    \ENDFOR
    \STATE Obtain quasi-optimal cooperative scheduling action $\mathbf{a}_{\hat{j}^\star}$ by concatenating $\mathbf{a}_{\hat{j}_c^\star}^c$: \\ $\mathbf{a}_{\hat{j}^\star} = \left[{\mathbf{a}_{\hat{j}_{1}^\star}^{1}}^{\top};{\mathbf{a}_{\hat{j}_{2}^\star}^{2}}^{\top};\ldots;{\mathbf{a}_{\hat{j}_{C}^\star}^{C}}^{\top}\right]^{\top}\in\{0,1\}^K$.
	\end{algorithmic}}}
\end{algorithm}


The BS selects the cluster link scheduling action that maximizes the cluster utility function as
\begin{align}
    \mathbf{a}_{\hat{j}_{c}^\star}^{c} = \underset{\mathbf{a}_{j_{c}}^{c} \in \mathcal{A}_{c}}{\mathrm{argmax}} ~ R_{\sf sum} (\mathbf{a}_{j_{c}}^{c}). \label{eq:low_opt}
\end{align}

Then, by concatenating the cluster link scheduling actions $\mathbf{a}_{\hat{j}_{c}^\star}^{c}$ for all $c\in \{1,2,\ldots,C\}$, the quasi-optimal cooperative scheduling action is obtained as
\begin{align}
    \mathbf{a}_{\hat{j}^\star} = \left[{\mathbf{a}_{\hat{j}_{1}^\star}^{1}}^{\top},{\mathbf{a}_{\hat{j}_{2}^\star}^{2}}^{\top},\ldots,{\mathbf{a}_{\hat{j}_{C}^\star}^{C}}^{\top}\right]^{\top}\in\{0,1\}^K.
\end{align}
The proposed L-QuasiOpt algorithm is summarized in Algorithm \ref{alg:lowcomp}.
The L-QuasiOpt algorithm reduces the $2^K$ computational complexity of optimization problem in \eqref{eq:ESE_opt} to $\sum_{c=1}^{C} 2^{K_c}$ by solving the optimization problem in \eqref{eq:low_opt} for each cluster.


 \begin{figure*}[!t]
\begin{center}
\includegraphics[width=12cm]{ 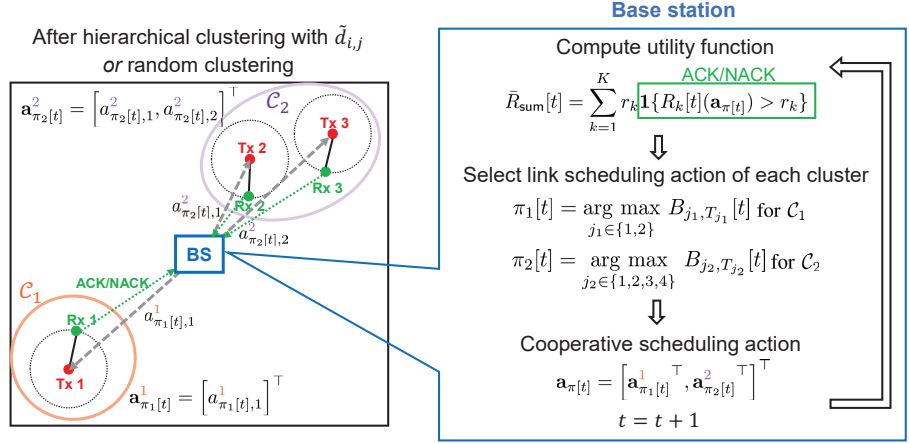}
\end{center}
\caption{An illustration of the BanditLinQ algorithm when $K=3$ and $C=2$.}\label{fig:Clustered_UCB}
\end{figure*}

\section{Quasi-optimal Link Scheduling with One-bit Feedback} 

In this section, we propose a D2D link scheduling algorithm that uses one-bit feedback information per D2D receiver. The proposed algorithm does not require any knowledge of network topology and channel fading distributions unlike previous D2D link scheduling algorithms in \cite{Huynh2016,Lee2015,Yin2016,Maghsudi2015,Yu2011,Jiang2018,Naqvi2018,Takshi2018,Khuntia2019,Gao2019,Navid2014}. Specifically, in each fading block, the proposed link scheduling strategy selects the cooperative action index $\pi[t] \in \{1,\ldots,J\}$ from the collection of the utility functions, i.e., $\{{\bar R}_{\sf sum}[t]({\bf a}_{\pi[t]})\}_{u=1}^{t-1} \rightarrow \pi[t] \in \{1,\ldots,J\}$, which are obtained with one-bit feedback information.

To identify the optimal cooperative scheduling action that maximizes the average sum-throughput, the BS has to obtain a mean utility function $\mathbb{E}\left[{\bar R}_{\sf sum}[t]({\bf a}_{j}) \right]$ for each cooperative link scheduling action only with the collection of ACK/NACK feedback. 
{We learn the mean utility function without knowing fading distribution and network topology in advance by using MAB framework. The reason for employing MAB framework is rooted in its capability to effectively handle scenarios where the underlying distribution of parameters, such as the fading distribution and network topology, is uncertain or unknown for interference management. MAB algorithms strike a balance between exploring and exploitation, providing theoretical guarantees of performance. Therefore, applying the MAB framework is an effective solution in scenarios where we need to find scheduling action that maximizes ergodic sum-throughput without complete knowledge.}
As one of the optimal exploration strategies in MAB framework, we consider the UCB1 algorithm \cite{Auer2002}. In each fading block, the UCB1 algorithm evaluates each action's potential to maximize the utility function based on previous observations. The action with the highest potential for maximizing the utility function is then selected. With this strategy, the UCB1 algorithm finds a sequence of the optimal cooperative scheduling actions that maximize the mean utility function. In other words, the UCB1 algorithm finds the optimal cooperative scheduling action sequence that minimizes the cumulative regret which is defined as
\begin{align}
	{\sf Reg}_{j^\star}[n] = n\mu_{j^{\star}} - \sum_{t=1}^n\mathbb{E}\left[ {\bar R}_{\sf sum}[t]({\bf a}_{\pi[t]}) \right], \label{eq:cumregret}
\end{align}
where $\mu_{j^\star}=\mathbb{E}\left[\Bar{R}_{\sf sum}[t](\mathbf{a}_{j^\star})\right]$ is the maximum mean utility function. 
The cumulative regret in \eqref{eq:cumregret} represents the cumulative differences between the average sum-throughput given the optimal cooperative scheduling action and the average sum-throughput of the selected cooperative scheduling actions. 

Although the UCB1 algorithm produces optimal logarithmic regret, it suffers from a significant limitation whereby the number of necessary explorations grows exponentially with the number of D2D links $K$. To handle this challenge, we propose the BanditLinQ link scheduling algorithm as a scalable D2D link scheduling algorithm for large $K$. The BanditLinQ algorithm clusters D2D links into multiple D2D clusters with one-bit feedback information and selects link scheduling action per cluster that maximizes the cluster utility function.

\begin{algorithm} [t]
	\caption{Link clustering algorithm.}\label{alg:clustering}
	{\small{\begin{algorithmic}[1]
    \IF{ Clustering with one-bit feedback}
    	\FOR {$t=1,\ldots, T_{\sf clust}$}
            \FOR{$i,j \in \{1,\ldots,K\}$}
                \STATE The D2D receiver estimates their received SNR and INR.
                \STATE The $i$th D2D receiver sends one-bit information to the BS:
                    $\mathbf{1}\{\text{INR}_{ij}[t]< \text{SNR}_i^\eta[t] , \text{INR}_{ji}[t] < \text{SNR}_i^\eta[t] \}$     
                \STATE Obtain one-bit information of distance $p_{i,j}[t]$:  $p_{i,j}[t] = \mathbf{1}\{\text{INR}_{ij}[t]< \text{SNR}_i^\eta[t] , \text{INR}_{ji}[t] < \text{SNR}_i^\eta[t] \}$.
            \ENDFOR
        \ENDFOR
        \STATE Hierarchical clustering with $\tilde{p}_{i,j} = \frac{1}{T_{\sf clust}}\sum_{t=1}^{T_{\sf clust}} p_{i,j}[t]$.

    \ELSIF{Random clustering}
        \STATE BS randomly divides $K$ links into $C$ clusters. 
    \ENDIF
	\end{algorithmic}}}
\end{algorithm}

\subsection{D2D Link Clustering Strategies} \label{sec:clustering}
\subsubsection{D2D Link Clustering with One-Bit Feedback}
We first describe the link clustering strategy that uses one-bit feedback per each D2D receiver. {We use the hierarchical clustering algorithm to cluster the severely interfering D2D links into the same cluster. Since the hierarchical clustering algorithm clusters links with small pairwise distances into the same cluster, the pairwise distance in hierarchical clustering has to be set small when the pairwise interference is large.
To acquire approximations of the pairwise interference levels using one-bit feedback information, we assume that all D2D links use different frequency bands each other and their transmissions cause no interference to other links in the clustering phase $T_{\sf clust}$. Then the D2D receiver estimates their received SNR and interference-to-noise ratio (INR) during the clustering phase $T_{\sf clust}$.
The one-bit information of interference level between the $i$th D2D link and the $j$th D2D link, denoted as $p_{i,j}$, is obtained inspired by the ITLinQ algorithm \cite{Navid2014}. Specifically, the $i$th D2D receiver sends one-bit information to the BS based on the following two conditions:
\begin{align}
    p_{i,j}=\mathbf{1}\{\text{INR}_{ij}[t]< \text{SNR}_i^\eta[t] , \text{INR}_{ji}[t] < \text{SNR}_i^\eta[t] \},
\end{align}
where $\text{INR}_{ij}$ denotes the INR of $j$th transmitter at the $i$th receiver for $i,j\in \{1,\ldots,K\}$ and $\eta$ is a hyper-parameter. If $\text{INR}_{ij}[t]\ge \text{SNR}_i^\eta[t]$ or $\text{INR}_{ji}[t] \ge \text{SNR}_i^\eta[t]$, the $i$th link causes or receives much interference to and from the $j$th link compared to the SNR. Therefore, the $i$th receiver sends pairwise distance information as $p_{i,j}[t]=0$ to cluster these two links into the same cluster. Conversely, if $\text{INR}_{ij}[t]< \text{SNR}_i^\eta[t]$ and $\text{INR}_{ji}[t] < \text{SNR}_i^\eta[t]$, the interference between two links is small compared to the SNR. In this scenario, the $i$th receiver sends pairwise distance information as $p_{i,j}[t]=1$ not to cluster these two links into the same cluster. After repeating this process during the clustering phase $T_{\sf clust}$, the BS computes an average of the one-bit pairwise distance information as $\tilde{p}_{i,j} = \frac{1}{T_{\sf clust}}\sum_{t=1}^{T_{\sf clust}} p_{i,j}[t]$. Finally, $\tilde{p}_{i,j}$ is used as pairwise distance information for the hierarchical clustering algorithm.}

\subsubsection{D2D Link Clustering without Any Knowledge}
In practice, however, it is unrealistic to acquire accurate SNR and INR. Therefore, we also consider a random clustering method that simply divides links into multiple small clusters. Since the random clustering method does not require feedback information from the D2D receiver during the clustering phase, $T_{\sf clust}=0$. The proposed link clustering algorithms are summarized in Algorithm \ref{alg:clustering}. In the following subsection, we propose the BanditLinQ link scheduling algorithm that finds quasi-optimal cooperative link scheduling action with clustered links.

\subsection{BanditLinQ Algorithm for D2D Link Scheduling}
Once the $K$ links are clustered into multiple clusters $\mathcal{C}_1,\mathcal{C}_2,\ldots,\mathcal{C}_C$ the BS selects a scheduling action in each fading block applying BanditLinQ algorithm. The key idea of the BanditLinQ algorithm is to apply the UCB1 algorithm per cluster with an empirical cluster utility function which is the empirically averaged instantaneous sum-throughput. 

Let ${\pi_{c}[u]} \in \{1,\ldots,2^{K_c}\}$ be a scheduling action index of cluster $\mathcal{C}_c$ at fading block $u$. By concatenating the cluster link scheduling actions $\mathbf{a}_{\pi_{c}[u]}^{c}$ for all $c\in \{1,2,\ldots,C\}$, the cooperative scheduling action at fading block $u$ is defined as
\begin{align}
    \mathbf{a}_{\pi[u]} = \left[{\mathbf{a}_{\pi_1[u]}^{1}}^{\top},{\mathbf{a}_{\pi_2[u]}^{2}}^{\top},\ldots,{\mathbf{a}_{\pi_C[u]}^{C}}^{\top}\right]^{\top}\in\{0,1\}^K.
\end{align}
We then define $T_{j_c}[t]$ as the number of trials for the cluster link scheduling action $\mathbf{a}_{j_c}^{c}$ before the $t$th fading blocks: 
\begin{align}
    T_{j_c}[t] = \sum_{u=1}^{t-1}\mathbf{1}{\{\pi_c[u] = j_c\}}.
\end{align}
To obtain the estimates of the cluster utility functions, the BS first computes the utility function ${\bar R}_{\sf sum}[u]({\bf a}_{\pi[u]})$ with ACK/NACK feedback information as in \eqref{eq:one_bit_uility}. Then, the empirical cluster utility function $\hat{\mu}_{j_c,T_{j_c}}[t]$ for the cluster scheduling action $\mathbf{a}_{j_c}^c$ at fading block $t$ is obtained by taking the sample average over the utility function ${\bar R}_{\sf sum}[u]({\bf a}_{\pi[u]})$ only when the cluster scheduling action $\mathbf{a}_{j_c}^c$ is selected up to the $t-1$th fading block as follows:
\begin{align}
    \hat{\mu}_{j_c,T_{j_c}}[t] = \sum_{u=1}^{t-1} \frac{{\bar R}_{\sf sum}[u]({\bf a}_{\pi[u]})\mathbf{1}{\{\pi_c[u] = j_c\}}}{T_{j_c}[t]}.
\end{align}
This empirical cluster utility function is the sum-throughput for the cluster scheduling action $\mathbf{a}_{j_c}^c$ of cluster $\mathcal{C}_c$, which is empirically averaged over interferences from other clusters.

Applying Hoeffding’s inequality and Chernoff bounds, the BanditLinQ algorithm harnesses the upper bound of the confidence interval to select the scheduling action at each fading block. The upper confidence bound of action $j_c$ of cluster $\mathcal{C}_c$ is defined as
\begin{align}
    B_{{j_c},T_{j_c}}[t] =  \hat{\mu}_{{j_c},T_{j_c}}[t] + \sqrt{\frac{\alpha  2^{K-K_c}\left(\sum_{k=1}^K r_k\right)^2 \ln t}{2 T_{j_c}[t]}}, \label{eq:ucbterm}
\end{align}
for some $\alpha>0$.
The BS selects the scheduling action of each cluster that provides the maximum value of $B_{{j_c},T_{j_c}}[t]$ at each fading block $t$:
\begin{align}
    \pi_c[t] = \underset{j_c \in \{1,\ldots,2^{K_c}\}}{\mathrm{arg~max}} ~B_{{j_c},T_{j_c}}[t].
\end{align}

The proposed algorithm is summarized in Algorithm \ref{alg:Cucb}. 
Taking $K=3$ as an example, Fig. \ref{fig:Clustered_UCB} shows the process of the BanditLinQ algorithm when $C=2$. The BS first clusters the $3$ D2D links into two clusters $\mathcal{C}_1$ and $\mathcal{C}_2$ then computes the empirical cluster utility function of each cluster using the ACK/NACK feedback information from 10 D2D receivers. Subsequently, the BS selects the link scheduling action $\mathbf{a}_{\pi_c[t]}^c$ of each cluster that maximizes the upper confidence bound.

\begin{algorithm} [t]
	\caption{BanditLinQ link scheduling algorithm.}\label{alg:Cucb}
	{\small{\begin{algorithmic}[1]
	\FOR {$t=1,\ldots, T$}
	\IF {$t<T_{\sf clust}$}
	\STATE Link clustering with Algorithm 2
	\ELSE
	    \FOR{$c = 1,\ldots, C$}
	        \IF {$ T_{j_c} [t] =0$ for $j_c=\{1,\ldots,2^{K_c}\}$}
            \STATE $\pi_c[t] = j_c$. ~~~~ (\text{initialization})
            \ELSE
            \STATE Compute the number of trials of the scheduling action $j_c$ until fading block $t$: \nonumber \\ $T_{j_c} [t] = \sum_{u=1}^{t-1}\mathbf{1}{\{\pi_c[u] = j_c\}}$ for $j_c=\{1,\ldots,2^{K_c}\}$.
            \STATE Compute the cluster utility function for the scheduling action $j_c$ at fading block $t$: \nonumber \\ $\hat{\mu}_{j_c,T_{j_c}}[t] = \sum_{u=1}^{t-1} \frac{{\bar R}_{\sf sum}[u]({\bf a}_{\pi[u]})\mathbf{1}{\{\pi_c[u] = j_c\}}}{ T_{j_c}[t]}$.
            \STATE Select the scheduling action in each cluster that provides maximum upper bound in the $t$th fading block: $\pi_c[t] = \underset{j_c \in \{1,\ldots,2^{K_c}\}}{\mathrm{arg~max}}~ B_{{j_c},T_{j_c}}[t]$.
            \ENDIF
	    \ENDFOR
	   \STATE Compute the cooperative scheduling action by gathering $\pi_c[t]$ for all $c = \{1,\ldots,C\}$: \nonumber \\ $ \mathbf{a}_{\pi[u]} = \left[{\mathbf{a}_{\pi_1[u]}^{1}}^{\top},{\mathbf{a}_{\pi_2[u]}^{2}}^{\top},\ldots,{\mathbf{a}_{\pi_C[u]}^{C}}^{\top}\right]^{\top}\in\{0,1\}^K$.
        \STATE Compute utility function ${\bar R}_{\sf sum}[t]({\bf a}_{\pi [t]})$:  ${\bar R}_{\sf sum}[t]({\bf a}_{\pi [t]}) = \sum_{k=1}^{K} r_k \mathbf{1}\{R_{k}[t]({\bf a}_{\pi [t]})>r_k\}$.
	\ENDIF
	\ENDFOR
	\end{algorithmic}}}
\end{algorithm}

\subsection{Optimality of the BanditLinQ Algorithm}
After a sufficient number of fading blocks, the empirical cluster utility function $\hat{\mu}_{j_c,T_{j_c}}[t]$ converges to the true cluster utility function $\mu_{j_c}$, which is the ergodic sum-throughput averaged over inter-cluster interference distributions:
\begin{align}
    \mu_{j_c} = \mathbb{E}_{\mathbf{a}_{j_\ell}^\ell}\left[\mathbb{E}_{|h_{k,\ell}[t]|^2}\left[{\bar R}_{\sf sum}[t]({\bf a}_{j})\right]\right],
\end{align}
where $\ell \ne c$.
We then define $\mu_{\hat{j}_c^\star}$ as a ergodic sum-throughput for the quasi-optimal cluster link scheduling action as
\begin{align}
    \mu_{\hat{j}_c^\star}= \underset{j_c\in \{1,\ldots,2^{K_c}\}}{\mathrm{max}}~\mathbb{E}_{\mathbf{a}_{j_\ell}^\ell}\left[\mathbb{E}_{|h_{k,\ell}[t]|^2}\left[{\bar R}_{\sf sum}[t]({\bf a}_{j})\right]\right]. 
\end{align}
Whereas, the ergodic sum-throughput for the optimal cluster link scheduling action is defined as
\begin{align}
    \mu_{j_c^\star}=\underset{j_c\in \{1,\ldots,2^{K_c}\}}{\mathrm{max}}\mathbb{E}_{|h_{k,\ell}[t]|^2}\left[{\bar R}_{\sf sum}[t]({\bf a}_{j})\big| j_\ell=j_\ell^\star,~\ell \ne c\right],
\end{align}
which is the maximum ergodic sum-throughput given the optimal actions of other clusters.

The BanditLinQ algorithm finds quasi-optimal cooperative action $\mathbf{a}_{\hat{j}^\star}$ with logarithmic regret, where the obtained quasi-optimal cooperative action has a constant sum-throughput gap from optimal cooperative action and it is proved by Theorem 2.

\begin{theorem}
 We define $\delta_c=\mu_{j_c^\star} - \mu_{\hat{j}_c^\star}$ as the gap between the cluster utility function of optimal cluster link scheduling action $\mathbf{a}_{j_c^{\star}}$ and quasi-optimal cluster link scheduling action $\mathbf{a}_{\hat{j}_c^{\star}}$. Further, we define $\Delta_{j_c}= \mu_{\hat{j}_c^{\star}}-\mu_{j_c}$ as the gap between the cluster utility function of quasi-optimal action $\mathbf{a}_{\hat{j}_c^\star}$ and action $\mathbf{a}_{j_c}$ of the cluster $\mathcal{C}_c$. $\Delta^{\sf max}= \underset{j\in [2^K]}{\mathrm{\max}}\mu_{\hat{j}^{\star}}-\mu_{j}$ denotes maximum gap between the utility function of quasi-optimal action $\mathbf{a}_{\hat{j}^\star}$ and action $\mathbf{a}_{j}$.
 
Then, the BanditLinQ algorithm finds quasi-optimal cooperative action $\mathbf{a}_{\hat{j}^\star}$ with logarithmic regret as
\begin{align}
    &\mathbb{E}\left[{\sf Reg}_{\hat{j}^\star}[n]\right] \nonumber\\ &\le \Delta^{\sf max} \sum_{c=1}^C \sum_{\substack{j_c=1, \\ j_c\ne \hat{j}_c^\star}}^{2^{K_c}} \left(\frac{2^{K-K_c+3}(\sum_{k=1}^K r_k)^2  \ln n}{\Delta_{j_c}^2}  + 1 + \frac{\pi^2}{3}\right),
\end{align}
where a maximum ergodic sum-throughput gap between the quasi-optimal cooperative link scheduling action $\mathbf{a}_{\hat{j}^\star}$ and optimal cooperative link scheduling action $\mathbf{a}_{j^\star}$ is $\frac{1}{C}\sum_{c=1}^{C} \delta_c$, i.e.,
\begin{align}
    \mu_{j^\star} - \mu_{\hat{j}^\star} \le \frac{1}{C}\sum_{c=1}^{C} \delta_c.
\end{align}
\end{theorem}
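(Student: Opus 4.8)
The plan is to establish the two displayed inequalities separately: the logarithmic regret bound by carrying out the UCB1 analysis of \cite{Auer2002} cluster by cluster, and the constant-gap bound by a cluster-wise decomposition argument. For the regret bound I would first reduce it to counting suboptimal cluster pulls. Since $\mu_{\hat{j}^\star}-\mu_{j}\le\Delta^{\sf max}$ for every $j$, a fading block in which at least one cluster $\mathcal{C}_c$ plays $\pi_c[t]\ne\hat{j}_c^\star$ contributes at most $\Delta^{\sf max}$ to the instantaneous expected regret, while a block with $\pi[t]=\hat{j}^\star$ contributes zero; using $\mathbf{1}\{\pi[t]\ne\hat{j}^\star\}\le\sum_{c}\mathbf{1}\{\pi_c[t]\ne\hat{j}_c^\star\}$ and taking expectations gives $\mathbb{E}[{\sf Reg}_{\hat{j}^\star}[n]]\le\Delta^{\sf max}\sum_{c=1}^{C}\sum_{j_c\ne\hat{j}_c^\star}\mathbb{E}[T_{j_c}[n+1]]$, so everything reduces to bounding $\mathbb{E}[T_{j_c}[n+1]]$ for each suboptimal cluster action.

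For that bound I would mirror the UCB1 argument at the cluster level. Action $j_c$ is selected at block $t$ only if $B_{j_c,T_{j_c}}[t]\ge B_{\hat{j}_c^\star,T_{\hat{j}_c^\star}}[t]$; on the event that both empirical cluster means lie within their confidence radii \eqref{eq:ucbterm}, this forces $T_{j_c}[t]\le \frac{2^{K-K_c+3}(\sum_{k}r_k)^2\ln n}{\Delta_{j_c}^2}$. This is exactly where the factor $2^{K-K_c}$ in \eqref{eq:ucbterm} enters: although $\bar R_{\sf sum}[u]$ lies in $[0,\sum_{k}r_k]$, cluster $\mathcal{C}_c$ controls only $K_c$ of the $K$ links, so among the $T_{j_c}[t]$ blocks in which it plays $j_c$ the $2^{K-K_c}$ possible joint sub-actions of the other clusters are, in the worst case, so spread out that the conditional mean relevant to $\mu_{j_c}$ is pinned down only to precision $\propto\sqrt{2^{K-K_c}/T_{j_c}[t]}$; taking the effective observation range to be $\sqrt{2^{K-K_c}}\sum_{k}r_k$ is what makes the Hoeffding step go through. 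The probability that the confidence event fails is, by Hoeffding's inequality applied conditionally on the other clusters' action sequence together with a union bound over the at most $t$ values of the counters, summable in $t$, and for the appropriate choice of $\alpha$ the residual series contributes the additive $1+\frac{\pi^2}{3}$. Combining the two pieces gives $\mathbb{E}[T_{j_c}[n+1]]\le \frac{2^{K-K_c+3}(\sum_{k}r_k)^2\ln n}{\Delta_{j_c}^2}+1+\frac{\pi^2}{3}$, and with the reduction above this is the claimed regret bound.

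For the constant gap I would write $\mu_{j^\star}-\mu_{\hat{j}^\star}=\tfrac{1}{C}\sum_{c=1}^C(\mu_{j^\star}-\mu_{\hat{j}^\star})$ and use that $\mu_{j_c^\star}=\mu_{j^\star}$ for every $c$: the globally optimal cooperative action $\mathbf{a}_{j^\star}$ is, in particular, a best response within $\mathcal{C}_c$ to the optimal actions of the other clusters, so re-maximizing over $j_c$ alone cannot increase the ergodic sum-throughput. Hence the claim reduces to $\tfrac{1}{C}\sum_{c}\mu_{\hat{j}_c^\star}\le\mu_{\hat{j}^\star}$. Here I would invoke the clustering property that inter-cluster interference is negligible, so that $\bar R_{\sf sum}(\mathbf{a}_j)$ decomposes, up to the negligible cross terms, as $\sum_{c}R_c(\mathbf{a}_{j_c}^{c})$ with each $R_c$ depending only on its own cluster's action; then $\mu_{\hat{j}_c^\star}$ equals cluster $\mathcal{C}_c$'s maximal ergodic contribution plus the Bernoulli$(0.5)$-average of the other clusters' contributions, while $\mu_{\hat{j}^\star}$ equals the sum of every cluster's maximal contribution. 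Averaging over $c$ and using ``max $\ge$ average'' for each cluster's contribution yields the inequality, with the slack equal to $\tfrac{1}{C}\sum_c(\mu_{j_c^\star}-\mu_{\hat{j}_c^\star})=\tfrac{1}{C}\sum_c\delta_c$.

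The main obstacle is the concentration step: the interference seen by $\mathcal{C}_c$ is produced by the other clusters, which are themselves running UCB, so the per-block observations are not i.i.d.\ and $\hat\mu_{j_c,T_{j_c}}[t]$ is only an approximately-unbiased estimate of $\mu_{j_c}$; making this rigorous — conditioning on the other clusters' trajectory and absorbing the resulting spread into the inflated confidence radius — is the crux. A secondary point is that several of the expectations being interchanged above (the fading-only mean, the cluster-averaged mean $\mu_{j_c}$, and the decomposition used for the constant-gap claim) coincide only in the limit of vanishing inter-cluster interference, so both bounds should be read up to the approximation that the hierarchical clustering of Section~\ref{sec:clustering} is designed to enforce.
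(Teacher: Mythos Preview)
Your regret-bound argument is essentially the paper's: the same reduction $\sum_{j\ne\hat j^\star}\mathbb{E}[T_j[n]]\le\sum_{c}\sum_{j_c\ne\hat j_c^\star}\mathbb{E}[T_{j_c}[n]]$ (you phrase it via $\mathbf{1}\{\pi[t]\ne\hat j^\star\}\le\sum_c\mathbf{1}\{\pi_c[t]\ne\hat j_c^\star\}$, the paper via the action counts directly), the same three-case UCB1 split, and the same Hoeffding step with the inflated constant $2^{K-K_c}(\sum_k r_k)^2$. On the ``main obstacle'' you flag---that the other clusters are themselves running UCB so the per-block observations are not i.i.d.---the paper does not actually resolve this; it explicitly \emph{assumes} for the Hoeffding step that ``one action out of the total $2^{K-K_c}$ actions of other clusters is selected with equal probability at each fading block,'' so your caveat is well placed and you should not expect to do better without a similar assumption.

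The constant-gap part differs. You try to \emph{derive} $\tfrac{1}{C}\sum_c\mu_{\hat j_c^\star}\le\mu_{\hat j^\star}$ from an approximate additive decomposition of $\bar R_{\sf sum}$ under negligible inter-cluster interference together with ``$\max\ge$ average.'' The paper instead simply \emph{postulates} $\mu_{\hat j_c^\star}\le\mu_{\hat j^\star}$ for every $c$ as a ``realistic'' premise (arguing that most joint actions of the other clusters give poor throughput) and then, using $\mu_{j^\star}=\mu_{j_c^\star}$, writes $\mu_{j^\star}-\mu_{\hat j^\star}\le \mu_{j^\star}-\tfrac{1}{C}\sum_c\mu_{\hat j_c^\star}=\tfrac{1}{C}\sum_c\delta_c$. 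Your route is more principled in that it ties the inequality to the design goal of the clustering, but it buys that only under the extra approximation you note; the paper's route is shorter but trades a structural justification for a bare assumption. Either way, the identity $\mu_{j_c^\star}=\mu_{j^\star}$ and the final averaging step coincide.
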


\begin{proof}
    See Appendix B.
\end{proof}

The main difference between the regret of the UCB1 algorithm \cite{Auer2002} and that of the BanditLinQ algorithm lies in the uncertainty term. The BanditLinQ algorithm requires more exploration to average out inter-cluster interferences, resulting in a larger uncertainty term than that of the UCB1 algorithm. The uncertainty term in \eqref{eq:ucbterm} of the BanditLinQ algorithm is derived under the assumption that the throughput of one link is affected by all other links. If the interferences between the clusters are very small as a result of clustering, the exploration cost, i.e., $2^{K-K_c}$, in the uncertainty term, can be greatly reduced and the regret can also be reduced. 

Further, there is a trade-off between the initialization cost and the exploration cost. Specifically, for small $C$, the number of D2D links per cluster ${K_c}$ is large. Then the exploration cost $2^{K-K_c}$ in \eqref{eq:ucbterm} becomes small but the initialization cost is large since the initialization requires $ 2^{{\max} K_c}$ fading blocks. In TABLE I, we show the example of the initialization cost and the exploration cost by cluster size when $K=50$, assuming that each cluster has the same number of D2D links. The cluster size should be determined to ensure that both the exploration cost and initialization cost remain reasonable by taking into account the number of fading blocks $T$.

\setlength{\tabcolsep}{3pt}
\begin{table}[!t]
\caption{Comparison of exploration cost \\and initialization cost by cluster size.\label{Table1}} 
\centering
\begin{tabular}{c|c|c}
\hline
 ~& Exploration cost &Initialization cost \\ \hline \hline
 No clustering ($C=1, K_c=50$)& 1 & $2^{50}$ fading blocks \\ \hline
 Clustering ($C=5, K_c=10$)& $2^{40}$ & $2^{10}$ fading blocks\\ \hline
  Clustering ($C=10, K_c=5$)& $2^{45}$ & $2^5$ fading blocks\\ \hline
\end{tabular}
\end{table}

{\bf Remark 2 (Scheduling complexity reduction):} The proposed BanditLinQ algorithm does not require any knowledge of the fading distribution and network topology. Instead, the empirical cluster utility function is obtained with repeated one-bit feedback per D2D link by MAB framework. Although the UCB1 algorithm finds the optimal link scheduling action with only one-bit feedback per D2D link, the number of actions that have to be explored exponentially increases with $K$. Specifically, the initialization phase for operating UCB1 algorithm requires $2^K$ fading blocks, which highly degrades the regret and sum-throughput performances. The proposed BanditLinQ algorithm reduces action space from $2^K$ to $\sum_{c=1}^C 2^{K_c}$ by clustering the entire links into small clusters. Unlike a classical CMAB framework \cite{Ortiz2019,hakami2022,Ontan2013} that breaks the whole MAB problem into several smaller MAB problems and updates the cluster utility function of each small MAB problem with its own cluster's throughput, the BanditLinQ algorithm updates the cluster utility function of each cluster with the sum-throughput. Therefore, the BanditLinQ algorithm finds quasi-optimal cooperative scheduling action in a centralized manner without the computational complexity of selecting the super-arm.

\section{Simulation Results}


In this section, we compare the performance of the proposed algorithms with the existing link scheduling algorithms in terms of the ergodic sum-throughput and average sum-spectral efficiency. In our simulation,
the path-loss exponent of each link is drawn from an IID uniform random variable, i.e., $\beta_{k,\ell} \sim \mathcal{U}(3.5,4.5)$. We drop $K$ D2D transmitters in the $\mathbb{R}^2$ plane and the positions are fixed over the fading blocks $T=5000$. The D2D receiver is dropped at a fixed distance $d_{k,k}=50 \text{m}$ away from the corresponding D2D transmitter in the initial fading block. The transmit power is set to $P = 0.08 \text{mW}$, and the noise power spectral density is considered to be $-143.97 \text{dBm}$.


We compare the BanditLinQ algorithm with the following link scheduling strategies: 
\begin{itemize}
	\item Information-theoretic link scheduling (ITLinQ) \cite{Navid2014}: each D2D link is scheduled if it does not cause and receive much interference to and from the other links. Specifically, if the following two conditions are satisfied, the $j$th link is activated:
	\begin{align}
	    \text{INR}_{ji} \le \text{SNR}_j^\eta ~~\forall{i<j}, ~
	    \text{INR}_{ij} \le \text{SNR}_j^\eta ~~\forall{i<j}.
	\end{align}
	\item Distributed on-off power control algorithm (D-OnOff) \cite{Lee2015}: the $k$th links are scheduled if 
	\begin{align}
	    	|h_{k,k}^2 d_{k,k}^{-\beta_{k,k}}|^2>\frac{-\ln \left(\min \left\{\frac{\text{sinc} \left(\frac{2}{\beta_{k,k}}\right)}{\pi \lambda \kappa ^{\frac{2}{\beta_{k,k}}}d_{k,k}^2},1\right\}\right)}{d_{k,k}^{\beta_{k,k}}}, \label{eq:D-onoff}
	\end{align}
	where $\lambda$ denotes the density of devices in network and $\kappa$ denotes target signal-to-interference ratio (SIR).
    \item D2D-CMAB \cite{Ortiz2019}: action for each link is selected in a distributed manner by using the NS strategy. The super-MAB, which selects the cooperative action, follows the greedy policy. Local-MAB, which is the distributed MAB per link, is used to explore new cooperative action and follows the preference-based policy.
	\item Random scheduling: the BS selects a cooperative scheduling action randomly out of $2^K$ actions at every fading block $t$.
	\item No scheduling: the BS schedules all the $K$ D2D links at every fading block $t$. 
\end{itemize}

\begin{figure}[t]
\begin{center}
\includegraphics[width=8.5cm]{ 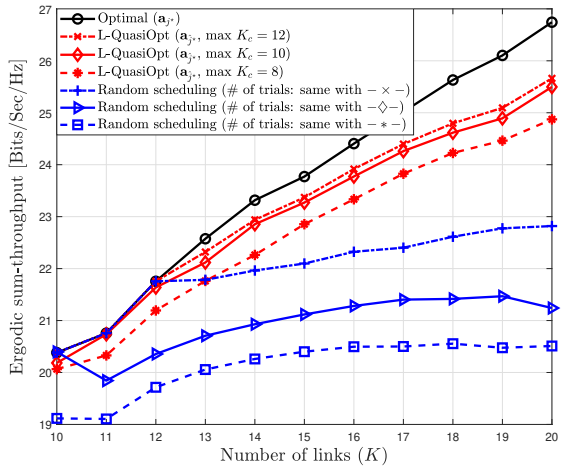}
\end{center}
\caption{{Ergodic sum-throughput $\mathbb{E}\left[\Bar{R}_{\sf sum}[t](\mathbf{a}_j)\right]$ comparison for $m=1$.}}\label{fig:ergodicST}
\end{figure}

Fig. \ref{fig:ergodicST} shows the comparison of ergodic sum-throughput $\mathbb{E}\left[\Bar{R}_{\sf sum}[t](\mathbf{a}_j)\right]$ with the selected cooperative scheduling action of each algorithm. We assume that the BS has knowledge of statistical CSI and network parameters for computing the ergodic sum-throughput. In this simulation, we drop $K$ D2D links uniformly in a  0.5 km $\times$ 0.5 km $\mathbb{R}^2$ plane for a more interfering environment with a small number of links. We set the target rate as $r_k = 5$ for all D2D links and set Nakagami parameter as $m=1$.  In the case of Optimal, $\mathbf{a}_{j^\star}$ is obtained by solving the combinatorial optimization problem with $2^K$ trials. It becomes the upper bound of ergodic sum-throughput performance. The L-QuasiOpt algorithm in section \ref{sec:low_comp} obtains quasi-optimal link scheduling action $\mathbf{a}_{\hat{j}^\star}$ after $\sum_{c=1}^C 2^{K_c}$ trials. We compute the cluster utility function of the L-QuasiOpt algorithm by assuming that the actions of other clusters follow Bernoulli(0.5) distribution. { We also present the ergodic sum-throughput averaged over the number of D2D links $K$ from $K=1$ to $K=20$ in TABLE \ref{Table:EST}. The number of trials for the L-QuasiOpt algorithm in our simulation by the maximum cluster size is summarized in TABLE \ref{Table:trials}. Random scheduling explores non-overlapping random actions with the same number of trials as the L-QuasiOpt algorithm and subsequently selects the best cooperative scheduling action from the set of explored actions. As can be seen in the figure and table, the L-QuasiOpt algorithm achieves high ergodic sum-throughput performance even with a very small computational complexity than $2^K$. Further, the L-QuasiOpt algorithm provides considerable gain over Random scheduling.  }

\setlength{\tabcolsep}{3pt}
\begin{table}[!t]
\caption{{Comparison of ergodic sum-throughput [Bits/Sec/Hz] \\ averaged over the number of D2D links $K$.}} \label{Table:EST}
\centering
\begin{tabular}{c|c|c|c}
\hline
 $\text{max}~K_c$ & $8$  &$10$ &$12$ \\ \hline \hline
Optimal & $23.68$ & $23.68$ &$23.68$   \\ \hline
L-QuasiOpt  & $22.65$    & $23.08$    & $23.22$       \\ \hline
Random scheduling  & $20.11$    & $20.92$    & $21.97$        \\ \hline
\end{tabular}
\end{table}

\setlength{\tabcolsep}{3pt}
\begin{table}[!t]
\caption{{ Comparison of trials by the number of links $K$.}} \label{Table:trials}
\centering
\begin{tabular}{c|c|c|c|c|c|c}
\hline
 $K$ & $10$  &$12$ &$14$ &$16$ &$18$ &$20$\\ \hline \hline
Optimal & $2^{10}$ & $2^{12}$ &$2^{14}$ &$2^{16}$ &$2^{18}$ &$2^{20}$  \\ \hline
L-QuasiOpt ($\text{max}~K_c=12$) & $2^{10}$    & $2^{12}$    & $2456$    & $2757$    & $2909$    & $3472$    \\ \hline
L-QuasiOpt ($\text{max}~K_c=10$) & $2^{10}$    & $636$    & $662$    & $681$    & $729$    & $738$    \\ \hline
L-QuasiOpt ($\text{max}~K_c=8$) & $191$    & $225$    & $289$    & $376$    & $382$    & $456$    \\ \hline
\end{tabular}

\end{table}

\begin{figure}[t]
    \centering
    \subfloat[]{\includegraphics[width=8cm]{ 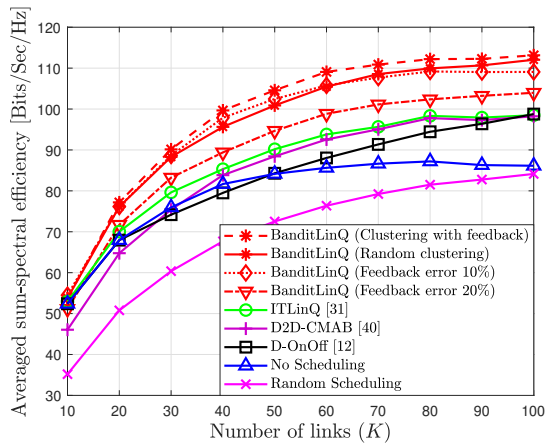}}\\
    \subfloat[]{\includegraphics[width=8cm]{ 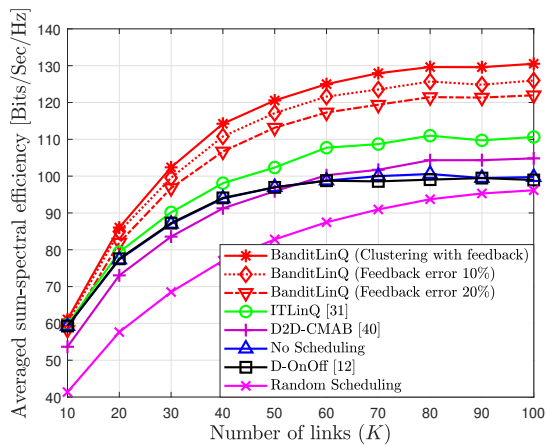}}
    \caption{{ Empirically averaged sum-spectral efficiency $R^{\sf Avg}_{\sf sum}[T]$ comparison of the link scheduling algorithms for (a)  $m=1$ and (b) $m=10$}.}\label{fig:avgST}
\end{figure}

{Fig. \ref{fig:avgST}-(a) and Fig. \ref{fig:avgST}-(b) show the empirically averaged sum-spectral efficiency $R^{\sf Avg}_{\sf sum}[T] = \sum_{t=1}^{T} \frac{ R_{\sf sum}[t](\mathbf{a}_{\pi [t]})}{T}$ for various link scheduling algorithms. We consider the non-line-of-sight (NLOS) scenario for fig. \ref{fig:avgST}-(a) by setting $m=1$, and the line-of-sight (LOS) scenario for fig. \ref{fig:avgST}-(b) by setting $m=10$.}
In this simulation, we drop $K$ D2D links uniformly in a 1 km $\times$ 1 km $\mathbb{R}^2$ plane and set the target rate as $r_k = 3$ for all D2D links. For BanditLinQ and D2D-CMAB algorithms, the BS only uses one-bit feedback information per receiver.  For the D-OnOff algorithm, we assume that each D2D receiver has knowledge about the link quality $|h_{k,k}|^2 d_{k,k}$, distance $d_{k,k}$, and density of a network $\lambda$ at every fading block. In addition, to leverage the threshold defined in \eqref{eq:D-onoff}, we assume that each D2D receiver estimates its own path-loss exponent as $\hat{\beta}_{k,k}$. Since it is infeasible to estimate the path-loss exponent accurately, we assume that the estimation is imperfect. The estimation error is assumed to be distributed as $|\beta_{k,k}-\hat{\beta}_{k,k}| \sim \mathcal{U}(0,0.5)$. Further, for the ITLinQ algorithm, each link is assumed to use its own frequency band thus the receiver perfectly estimates their $\text{SNR}$ and $\text{INR}$ in every fading block. The BanditLinQ (Random clustering) means that the BS divides links randomly and BanditLinQ (Clustering with feedback) means that the BS clusters the link using SNR and INR following a description in \ref{sec:clustering} with {$T_{\sf clust}=10$}. { Additionally, BanditLinQ (Feedback error) takes into account feedback errors in the form of ACK/NACK flip probability, set at $0.1$ and $0.2$.} These assumptions remain consistent unless mentioned otherwise. One remarkable observation is that the proposed BanditLinQ algorithm outperforms the D-OnOff and ITLinQ algorithms in the sense of the averaged sum-spectral efficiency only with one-bit feedback per receiver without any knowledge about the network parameters or interference level. The performance gap between D-OnOff and BanditLinQ is due to the centralized scheduling gain. Moreover, the performance gap between No Scheduling/Random Scheduling and BanditLinQ comes from the one-bit CSI gain. {It is important to note that, despite the degradation in sum-spectral efficiency caused by feedback errors, BanditLinQ still exhibits superior performance compared to other algorithms, especially when with 10 $\%$ feedback errors. This is because introducing a 10 $\%$ perturbation into the utility function has a small impact on the ability to find the quasi-optimal link scheduling action using the BanditLinQ algorithm. Therefore, we conclude that the multi-armed bandit framework employing an objective function based on ACK/NACK feedback is robust to the perturbation of the objective function. }

\begin{figure}[t]
\begin{center}
\includegraphics[width=8cm]{ 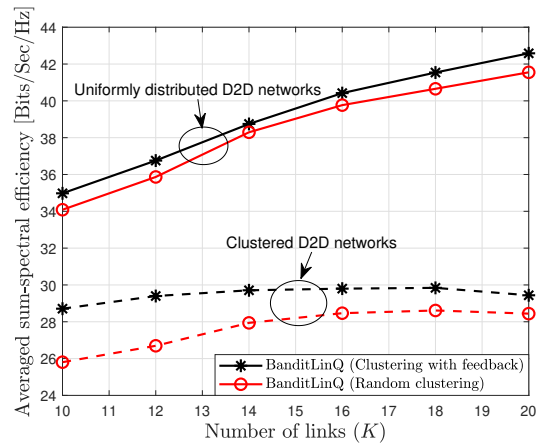}
\end{center}
\caption{Comparison of the BanditLinQ algorithm according to the clustering strategies in different D2D networks for $m = 1$.}\label{fig:clustering}
\end{figure}

\begin{figure}[t]
\begin{center}
\includegraphics[width=8cm]{ 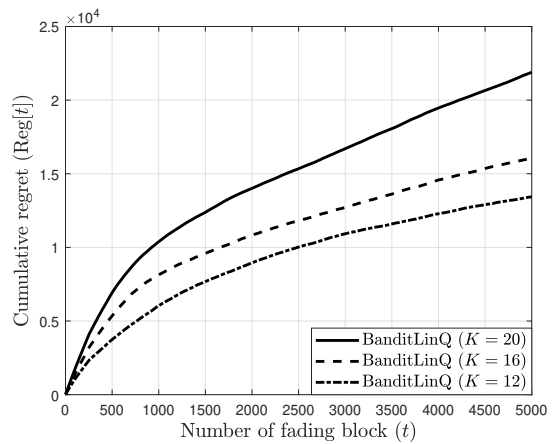}
\end{center}
\caption{{ Cumulative regret performances of BanditLinQ algorithm according to the number of fading blocks for $m=1$.}}\label{fig:regret}
\end{figure}

Another observation is that the link clustering with feedback using INR and SNR shows almost the same sum-spectral efficiency with random clustering. However, when the D2D links are clustered together in multiple clusters in a sparse network, the clustering with feedback reduces the inter-cluster interference significantly and consequently has higher averaged sum-spectral efficiency than random clustering. Fig. \ref{fig:clustering} shows the empirically averaged sum-spectral efficiency of the BanditLinQ algorithm for a sparse network with clustered D2D links in a 0.5 km $\times$ 0.5 km $\mathbb{R}^2$ plane. As can be seen, the performance gain obtained from clustering with feedback is more pronounced in clustered D2D networks compared to uniformly distributed D2D networks.

{Fig. \ref{fig:regret} shows the cumulative regrets ${\sf Reg}[t]$ of the BanditLinQ algorithm according to the number of fading blocks.  We drop $K$ D2D links uniformly in a 0.5 km $\times$ 0.5 km $\mathbb{R}^2$ plane and set the target rate as $r_k$ = 5 for all D2D links. The cumulative regret provides the cumulative ergodic sum-throughput gap between the optimal action and selected action using the BanditLinQ algorithm over fading blocks. The BanditLinQ algorithm shows logarithmic cumulative regret in finding the optimal cooperative link scheduling action during initial fading blocks. This means the ergodic sum-throughput gap between the optimal action and the action chosen by BanditLinQ steadily decreases. However, after a substantial number of fading blocks, the BanditLinQ algorithm shows linear regret, as it converges to a quasi-optimal cooperative link scheduling action. Despite this, the slopes of the regret curve remain small, underscoring the effectiveness of the BanditLinQ algorithm.}

\begin{figure}[t]
\begin{center}
\includegraphics[width=8cm]{ 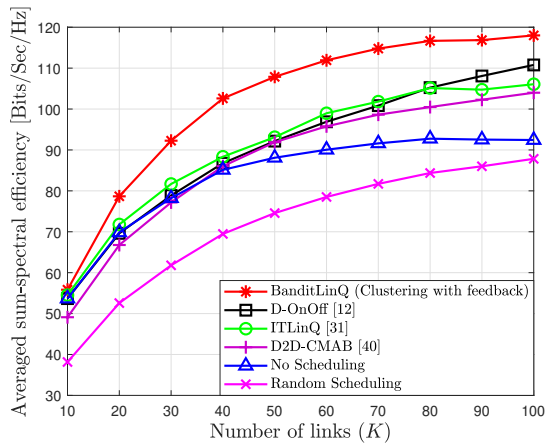}
\end{center}
\caption{{Empirically averaged sum-spectral efficiency $R^{\sf Avg}_{\sf sum}[T]$ comparison considering D2D receivers' mobility for $m=1$.}}\label{fig:AvgSE_mobility}
\end{figure}

{Fig. \ref{fig:AvgSE_mobility} shows empirically averaged sum-spectral efficiency $R^{\sf Avg}_{\sf sum}[T]$ for various link scheduling algorithms considering the device mobility. Specifically, we assume that the D2D receivers' locations change $1$ m randomly from the locations in the previous fading block in every $10$ fading blocks during $T=5000$ fading blocks, while keeping the locations of D2D transmitters fixed. To adapt to the variations in path loss resulting from this mobility, we compute the empirical cluster utility function of the BanditLinQ algorithm by imposing weight parameter $w$ that is less than $1$ to the previously obtained utility function as
\begin{align}
    \hat{\mu}_{j_c,T_{j_c}}[t] &=  \frac{\sum_{u=1}^{t-1}w^{t-u-1}{\bar R}_{\sf sum}[u]({\bf a}_{\pi[u]})\mathbf{1}{\{\pi_c[u] = j_c\}}}{\sum_{s=1}^{t-1}w^{t-s-1}\mathbf{1}{\{\pi_c[s] = j_c\}}}.
\end{align}
In this simulation, we set $w=0.9999$. As can be seen, the BanditLinQ algorithm outperforms other algorithms, especially the D-OnOff algorithm which relies on updated distance information. This highlights the superior ability of the BanditLinQ algorithm to effectively track changes in network topology.}

\section{Conclusion}
In this paper, we considered the D2D link scheduling problem for D2D communications in ultra-dense NB-IoT networks. We first derive the analytic expression of the ergodic sum-spectral efficiency as a function of scheduling action and network parameters. Then, the optimal link scheduling action was obtained by solving the combinatorial optimization problem. To reduce the computational complexity of the combinatorial optimization problem, the L-QuasiOpt algorithm is proposed that clusters the D2D links and selects the cluster link scheduling action that maximizes the cluster utility function. Lastly, we presented the BanditLinQ algorithm which is the link scheduling algorithm using one-bit feedback information. The BanditLinQ algorithm reduces the action space exponentially by clustering the D2D links with one-bit feedback information. Further, by updating the empirical cluster utility function with sum-throughput, the proposed algorithm finds quasi-optimal cooperative link scheduling action. Theoretical analysis proved that the proposed BanditLinQ algorithm achieves the logarithmic regret in finding the quasi-optimal cooperative action which ensures the optimality within a constant sum-throughput gap. Via simulations, we verified that the BanditLinQ algorithm produces a performance gain compared to other link scheduling approaches in the sum-spectral efficiency.

\section{Appendix}
\subsection{Proof for Theorem 1}
\begin{proof}
We compute the ergodic spectral efficiency for cooperative scheduling action $\mathbf{a}_j$ as
\begin{align}
    &R_k(\mathbf{a}_j) \nonumber\\&= \mathbb{E}_{\left|{h}_{k,\ell}[t]\right|^2} \left[ \log_2\left(\!
 1\!+\! \frac{\left|h_{k,k}[t]\right|^2d_{k,k}^{-\beta_{k,k}}{a}_{j,k} }{\sum_{\ell\neq k}\left|{h}_{k,\ell}[t]\right|^2 d_{k,\ell}^{-\beta_{k,\ell}}{a}_{j,\ell} +\frac{1}{\sf snr}} \!\right)\right] \nonumber \\
 & \overset{(a)}{=} \log_2{e} \int_{0}^{\infty}\frac{e^{-\frac{z}{\sf snr}}}{z} \left(1-\mathbb{E}\left[e^{-z |h_{k,k}[t]|^2d_{k,k}^{-\beta_{k,k}}a_{j,k}}\right] \right) \nonumber\\ & ~~~~~~~~~~~~~~~~~~~~~~\cdot \prod_{\ell \ne k} \mathbb{E}\left[e^{-z |h_{k,\ell}[t]|^2d_{k,\ell}^{-\beta_{k,\ell}}a_{j,\ell}}\right] \mathrm{d}z, \label{eq:erSE}
\end{align} 
where (a) is by Lemma 1.

Using the fact that $|h_{k,k}[t]|^2 \sim \text{Gamma}(m,m)$, the first expectation in \eqref{eq:erSE} is computed as
\begin{align}
    \mathbb{E}\left[e^{-z |h_{k,k}[t]|^2d_{k,k}^{-\beta_{k,k}}a_{j,k}}\right]  =\frac{1}{\left(1+\frac{z d_{k,k}^{-\beta_{k,k}}a_{j,k}}{m}\right)^m}. \label{eq:laplace}
\end{align}
 The second expectation in \eqref{eq:erSE} is directly obtained by setting $m=1$ to \eqref{eq:laplace}. Then the ergodic spectral efficiency is expressed as
\begin{align}
    R_k(\mathbf{a}_j)  & = \log_2{e} \int_{0}^{\infty}\frac{e^{-\frac{z}{\sf snr}}}{z} \left(1- {\left(1+\frac{z d_{k,k}^{-\beta_{k,k}}a_{j,k}}{m}\right)^{-m}}\right)  \nonumber\\ &~~~~~~~~\cdot \prod_{\ell \ne k} \frac{1}{1+z d_{k,\ell}^{-\beta_{k,\ell}}a_{j,\ell} } \mathrm{d}z \nonumber \\ 
 &= \log_2{e} \int_{0}^{\infty} a_{j,k} \frac{e^{-\frac{z}{\sf snr}}}{z} \left(1- {\left(1+\frac{z d_{k,k}^{-\beta_{k,k}}}{m}\right)^{-m}}\right)\nonumber\\ &~~~~~~~~~\cdot \prod_{\ell \ne k} \frac{1}{\left(1+z d_{k,\ell}^{-\beta_{k,\ell}}\right)^{a_{j,\ell}} } \mathrm{d}z.
\end{align}
By summing the ergodic spectral efficiencies of all D2D links, we arrive at the expression in \eqref{theorem1}, which completes the proof.
\end{proof}

\subsection{Proof for Theorem 2}
\begin{proof} 
We first prove that the BanditLinQ algorithm achieves optimal logarithmic regret bound in finding the quasi-optimal cooperative action ${\mathbf{a}}_{\hat{j}^\star}$ regardless of the clustering methods. For these derivations, the proof of UCB1 algorithm in \cite{Auer2002} is used as a baseline. 

The cumulative regret is upper bounded by
\begin{align}
    &\mathbb{E}\left[\!{\sf Reg}_{\hat{j}^\star}[n]\!\right] \!=\! n\mu_{\hat{j}^\star}\!-\!\mathbb{E}\left[\!\sum_{t=1}^n \mu _{\pi [t]}\!\right]  \!=\! \mathbb{E}\left[\!\sum_{j=1}^{2^K}T_j[n]\left(\mu_{\hat{j}^\star}\!-\!\mu_j\right)\!\right] \nonumber \\ &= \sum_{\substack{j=1, \\ j\ne \hat{j}^\star}}^{2^K} \Delta_j \mathbb{E}\left[T_j[n]\right] \le \Delta^{\sf max} \sum_{\substack{j=1, \\ j\ne \hat{j}^\star}}^{2^K}  \mathbb{E}\left[T_j[n]\right], \label{eq:regup}
\end{align}
where $\Delta_j = \mu_{\hat{j}^\star}-\mu_j$ and $\Delta^{\sf max}= \underset{j\in [2^K]}{\mathrm{\max}}\mu_{\hat{j}^{\star}}-\mu_{j}$.
The total expected number of selecting the non-optimal cooperative action in \eqref{eq:regup} is upper bounded by the sum of the expected number of selecting the non-optimal action in each cluster as follows:
\begin{align}
    \sum_{\substack{j=1, \\ j\ne \hat{j}^\star}}^{2^K}  \mathbb{E}\left[T_j[n]\right] \le \sum_{c=1}^C \sum_{\substack{j_c=1, \\ j_c\ne \hat{j}_c^\star}}^{2^{K_c}} \mathbb{E}\left[T_{j_c}[n]\right].
\end{align}
Then, the upper bound of cumulative regret is expressed by the expected number of selecting the non-optimal action in each cluster as
\begin{align}
    \mathbb{E}\left[{\sf Reg}_{\hat{j}^\star}[n]\right] \le \Delta^{\sf max} \sum_{c=1}^C \sum_{\substack{j_c=1, \\ j_c\ne \hat{j}_c^\star}}^{2^{K_c}} \mathbb{E}\left[T_{j_c}[n]\right].
\end{align}

Therefore, we obtain the regret upper bound by simply bounding the expected number of selecting the non-optimal action in each cluster, i.e., $\mathbb{E}\left[T_{j_c}[n]\right]$. A non-optimal action $\mathbf{a}_{j_c}^c$ is selected at fading block $t$ if $B_{{j_c},T_{j_c}}[t] \ge B_{{\hat{j}_c^\star},T_{\hat{j}_c^\star}}[t]$.
Then for any integer $u$, the number of selecting the non-optimal action $\mathbf{a}_{j_c}^c$ until fading block $n$ is upper bounded by
\begin{align}
    &T_{j_c}[n]  \nonumber\\
    &\le u +\!\!\! \sum_{t=2^{K_c}+1}^n \!\! \mathbf{1}\Bigl\{B_{\hat{j}_c^\star, T_{\hat{j}_c^\star}}[t\!-\!1] \!\le\! B_{j_c,T_{j_c}}[t\!-\!1], T_{j_c}[t\!-\!1]\!\ge\! u \Bigr\} \nonumber \\
    & \le u + \sum_{t=1}^\infty \sum_{s^\star =1}^{t-1} \sum_{s = u}^{t-1} \mathbf{1}\left\{ B_{\hat{j}_c^\star,s^\star}[t] \le B_{j_c,s}[t]\right\}. \label{eq:ineq1}
\end{align}
The $ B_{\hat{j}_c^\star,s^\star}[t] \le B_{j_c,s}[t]$ in \eqref{eq:ineq1} signifies that at least one of the following must be satisfied:
\begin{align}
    \hat{\mu}_{j_c,s}[t] &\ge \mu_{j_c} \!+\! \sqrt{\frac{2^{K-K_c+1}\left(\sum_{k=1}^K r_k\right)^2 \ln t}{s}}, \label{eq:cond1} \\ \hat{\mu}_{\hat{j}_c^\star,s^\star}[t] &\le \mu_{\hat{j}_c^\star} \!-\! \sqrt{\frac{2^{K-K_c+1}\left(\sum_{k=1}^K r_k\right)^2 \ln t}{s^\star}}, \label{eq:cond2}\\
    \mu_{\hat{j}_c^\star} &< \mu_{j_c}+2\sqrt{\frac{2^{K-K_c+1}\left(\sum_{k=1}^K r_k\right)^2 \ln t}{s}}. \label{eq:cond3}
\end{align}

Let $\Delta_{j_c}= \mu_{\hat{j}_c^{\star}}-\mu_{j_c}$ be the gap between the cluster utility function of quasi-optimal action $\mathbf{a}_{\hat{j}_c^\star}$ and action $\mathbf{a}_{j_c}$ of the cluster $\mathcal{C}_c$.
Then, if we set $u = \left\lceil\frac{2^{K-K_c+3}(\sum_{k=1}^K r_k)^2  \ln n}{\Delta_{j_c}^2}\right\rceil$, the condition \eqref{eq:cond3} is false for $s\ge u$ since
\begin{align}
    &\mu_{\hat{j}_c^\star} - \mu_{j_c}-2\sqrt{\frac{2^{K-K_c+1}\left(\sum_{k=1}^K r_k\right)^2 \ln t}{s}} \nonumber\\&\ge \mu_{\hat{j}_c^\star} - \mu_{j_c} - \Delta_{j_c} =0.
\end{align}

To bound the probability of events in \eqref{eq:cond1} and \eqref{eq:cond2}, we apply Hoeffding's inequality. Considering the cluster $\mathcal{C}_c$, we assume that one action out of the total $2^{K-K_c}$ actions of other clusters is selected with equal probability at each fading block. Therefore, the probability that the difference between empirical cluster utility function $\hat{\mu}_{j_c,s}[t]$ and the true cluster utility function $\mu_{j_c}$ exceeds $\epsilon>0$ is upper bounded by
\begin{align}
    \mathbb{P}\left[|\hat{\mu}_{j_c,s}[t]-\mu_{j_c} |\ge \epsilon \right] &\le 2 \exp \left(-\frac{2s\epsilon^2}{2^{K-K_c} \left(\sum_{k=1}^K r_k\right)^2}\right).  
    \label{eq:Hoe}
\end{align}
Applying $\epsilon = \sqrt{2^{K-K_c+1}\left(\sum_{k=1}^K r_k\right)^2 \ln t / s}$, the \eqref{eq:Hoe} is written as
\begin{align}
    \mathbb{P}\left[\hat{\mu}_{j_c,s}[t]-\mu_{j_c} \ge \epsilon \right] \le t^{-4},
     ~\mathbb{P}\left[\hat{\mu}_{j_c,s}[t]-\mu_{j_c} \le -\epsilon \right] \le t^{-4}.
\end{align}
This bound implies that the empirical cluster utility function will differ from the true cluster utility function by at most  $\epsilon$ with probability at least $1-t^{-4}$. 

Then, by taking expectation at both sides of \eqref{eq:ineq1}, we obtain the upper bound of the expected number of selecting the non-optimal action in cluster $\mathcal{C}_c$ until fading block $n$ as follows:

\begin{align}
     &\mathbb{E}\left[T_{j_c}[n]\right]  
 \!\le \!\left\lceil \! \frac{2^{K-K_c+3}(\sum_{k=1}^K r_k)^2  \ln n}{\Delta_{j_c}^2}\! \right\rceil \!\nonumber\\&+\! \sum_{t=1}^\infty \! \sum_{s^\star}^{t-1}\! \sum_{s = u}^{t-1} \!\left(\! \mathbb{P}{\Big[\hat{\mu}_{j_c,s}[t] \!\ge \mu_{j_c} \!+\! \epsilon \Big]} \!+\! \mathbb{P}{\Big[ \hat{\mu}_{\hat{j}_c^\star,s^\star}[t] \! \le \mu_{\hat{j}_c^\star} \!-\! \epsilon \Big]} \!\right) \nonumber \\  &\le \frac{2^{K-K_c+3}(\sum_{k=1}^K r_k)^2  \ln n}{\Delta_{j_c}^2}  + 1 + \frac{\pi^2}{3}.
\end{align}
Therefore, the proposed BanditLinQ algorithm achieves logarithmic regret in finding the quasi-optimal cooperative action $\mathbf{a}_{\hat{j}^\star}$ as
\begin{align}
    &\mathbb{E}\left[{\sf Reg}_{\hat{j}^\star}[n]\right] \!\le \Delta^{\sf max} \sum_{c=1}^C \sum_{\substack{j_c=1, \\ j_c\ne \hat{j}_c^\star}}^{2^{K_c}} \mathbb{E}\left[T_{j_c}[n]\right] \nonumber\\&\le \Delta^{\sf max} \sum_{c=1}^C \sum_{\substack{j_c=1, \\ j_c\ne \hat{j}_c^\star}}^{2^{K_c}} \left(\!\frac{2^{K-K_c+3}(\sum_{k=1}^K r_k)^2  \ln n}{\Delta_{j_c}^2}  + 1 + \frac{\pi^2}{3}\!\right).
\end{align}

As proved above, the BanditLinQ algorithm finds a quasi-optimal cooperative action ${\mathbf{a}}_{\hat{j}^\star}$ which is the concatenation of quasi-optimal cluster link scheduling actions as
\begin{align}
    {\mathbf{a}}_{\hat{j}^\star} = \left[{\mathbf{a}_{\hat{j}_1^\star}^1}^{\top},{\mathbf{a}_{\hat{j}_2^\star}^2}^{\top},\ldots,{\mathbf{a}_{\hat{j}_C^\star}^C}^{\top}\right]^{\top}\in \{0,1\}^K.
\end{align}
Then, the mean utility function, i.e., ergodic sum-throughput, of quasi-optimal cooperative scheduling action ${\mathbf{a}}_{\hat{j}^\star}$ is written as
\begin{align}
    \mu_{\hat{j}^\star} \!=\! \mathbb{E}_{|h_{k,\ell}[t]|^2}\left[\Bar{R}_{\sf sum}[t](\mathbf{a}_{j}) \Big| j_c = \hat{j}_c^\star,~c=1,2,\ldots C\right].
\end{align}
We assume that $\mu_{\hat{j}_c^\star}<\mu_{\hat{j}^\star}$, and this assumption is realistic since most of the actions of other clusters produce poor sum-throughput performance.

On the other hand, the optimal cooperative scheduling action is expressed as a concatenation of the optimal scheduling action of each cluster as
\begin{align}
    \mathbf{a}_{j^{\star}} = \left[{\mathbf{a}_{j_1^{\star}}^1}^{\top},{\mathbf{a}_{j_2^{\star}}^2}^{\top},\ldots,{\mathbf{a}_{j_C^{\star}}^C}^{\top}\right]^{\top} \in \{0,1\}^K.
\end{align}
Then, the mean utility function of optimal cooperative scheduling action $\mathbf{a}_{j}^\star$ is expressed as
\begin{align}
    \mu_{j^\star} \!=\! \mathbb{E}_{|h_{k,\ell}[t]|^2}\!\!\left[\!\Bar{R}_{\sf sum}[t](\mathbf{a}_{j}) \Big| j_c[t] \!=\! j_c^\star,~c\!=\!1,\ldots, C\right],
\end{align}
which is equal to the sum-throughput for the optimal cluster link scheduling action, i.e., $\mu_{j^\star} = \mu_{{j}_c^\star}$.

Therefore, under the premise that $\mu_{\hat{j}_c^\star}<\mu_{\hat{j}^\star}$, the gap between the ergodic sum-throughput of optimal cooperative action $\mathbf{a}_{j^{\star}}$ and quasi-optimal cooperative action $\mathbf{a}_{\hat{j}^{\star}}$ is upper bounded by
\begin{align}
    \mu_{j^\star} \!-\! \mu_{\hat{j}^\star} & \!\le\! \mu_{j^\star} \!-\! \frac{1}{C}\sum_{c=1}^{C} \mu_{\hat{j}_c^{\star}} \!=\!\frac{1}{C}\sum_{c=1}^{C}\left[\mu_{j_c^{\star}}\!-\!\mu_{\hat{j}_c^{\star}}\right],
\end{align}
which completes the proof.

\end{proof}

\bibliographystyle{IEEEtran}
\bibliography{IEEEabrv,Reference}

\begin{thebibliography}{10}
\providecommand{\url}[1]{#1}
\csname url@samestyle\endcsname
\providecommand{\newblock}{\relax}
\providecommand{\bibinfo}[2]{#2}
\providecommand{\BIBentrySTDinterwordspacing}{\spaceskip=0pt\relax}
\providecommand{\BIBentryALTinterwordstretchfactor}{4}
\providecommand{\BIBentryALTinterwordspacing}{\spaceskip=\fontdimen2\font plus
\BIBentryALTinterwordstretchfactor\fontdimen3\font minus \fontdimen4\font\relax}
\providecommand{\BIBforeignlanguage}[2]{{%
\expandafter\ifx\csname l@#1\endcsname\relax
\typeout{** WARNING: IEEEtran.bst: No hyphenation pattern has been}%
\typeout{** loaded for the language `#1'. Using the pattern for}%
\typeout{** the default language instead.}%
\else
\language=\csname l@#1\endcsname
\fi
#2}}
\providecommand{\BIBdecl}{\relax}
\BIBdecl

\bibitem{akpakwu2017}
G.~A. Akpakwu, B.~J. Silva, G.~P. Hancke, and A.~M. Abu-Mahfouz, ``A survey on {5G} networks for the {I}nternet of {T}hings: Communication technologies and challenges,'' \emph{IEEE Access}, vol.~6, pp. 3619--3647, 2017.

\bibitem{rastogi2020}
E.~Rastogi, N.~Saxena, A.~Roy, and D.~R. Shin, ``Narrowband {I}nternet of {T}hings: A comprehensive study,'' \emph{Computer networks}, vol. 173, p. 107209, 2020.

\bibitem{migabo2020}
E.~M. Migabo, K.~D. Djouani, and A.~M. Kurien, ``The narrowband {I}nternet of {T}hings ({NB-IoT}) resources management performance state of art, challenges, and opportunities,'' \emph{IEEE Access}, vol.~8, pp. 97\,658--97\,675, 2020.

\bibitem{rico2016}
A.~Rico-Alvarino, M.~Vajapeyam, H.~Xu, X.~Wang, Y.~Blankenship, J.~Bergman, T.~Tirronen, and E.~Yavuz, ``An overview of {3GPP} enhancements on machine to machine communications,'' \emph{{IEEE} Commun. Mag.}, vol.~54, no.~6, pp. 14--21, June 2016.

\bibitem{gbadamosi2022}
S.~A. Gbadamosi, G.~P. Hancke, and A.~M. Abu-Mahfouz, ``Interference avoidance resource allocation for {D2D}-enabled {5G} narrowband {I}nternet of {T}hings,'' \emph{{IEEE} Internet Things J.}, vol.~9, no.~22, pp. 22\,752--22\,764, Nov. 2022.

\bibitem{orsino2017}
A.~Orsino, I.~Farris, L.~Militano, G.~Araniti, S.~Andreev, I.~Gudkova, Y.~Koucheryavy, and A.~Iera, ``Exploiting {D2D} communications at the network edge for mission-critical {IoT} applications,'' in \emph{Proc. 23rd Eur. Wireless Conf. (VDE)}, 2017, pp. 1--6.

\bibitem{Tehrani2014}
M.~N. Tehrani, M.~Uysal, and H.~Yanikomeroglu, ``Device-to-device communication in {5G} cellular networks: challenges, solutions, and future directions,'' \emph{{IEEE} Commun. Mag.}, vol.~52, no.~5, pp. 86--92, May 2014.

\bibitem{Feng2014}
D.~Feng, L.~Lu, Y.~Yuan-Wu, G.~Y. Li, S.~Li, and G.~Feng, ``Device-to-device communications in cellular networks,'' \emph{{IEEE} Commun. Mag.}, vol.~52, no.~4, pp. 49--55, Apr. 2014.

\bibitem{Zhang2017}
L.~Zhang, M.~Xiao, G.~Wu, M.~Alam, Y.-C. Liang, and S.~Li, ``A survey of advanced techniques for spectrum sharing in {5G} networks,'' \emph{{IEEE} Wireless Commun.}, vol.~24, no.~5, pp. 44--51, Oct. 2017.

\bibitem{Huynh2016}
T.~Huynh, T.~Onuma, K.~Kuroda, M.~Hasegawa, and W.-J. Hwang, ``Joint downlink and uplink interference management for device to device communication underlaying cellular networks,'' \emph{IEEE Access}, vol.~4, pp. 4420--4430, 2016.

\bibitem{Jafar2011}
S.~A. Jafar \emph{et~al.}, ``Interference alignment—a new look at signal dimensions in a communication network,'' \emph{Found. Trends Commun. Inf. Theory}, vol.~7, no.~1, pp. 1--134, 2011.

\bibitem{Lee2015}
N.~Lee, X.~Lin, J.~G. Andrews, and R.~W. Heath, ``Power control for {D2D} underlaid cellular networks: Modeling, algorithms, and analysis,'' \emph{{IEEE} J. Sel. Areas Commun.}, vol.~33, no.~1, pp. 1--13, Jan. 2015.

\bibitem{Yin2016}
R.~Yin, C.~Zhong, G.~Yu, Z.~Zhang, K.~K. Wong, and X.~Chen, ``Joint spectrum and power allocation for {D2D} communications underlaying cellular networks,'' \emph{{IEEE} Trans. Veh. Commun.}, vol.~65, no.~4, pp. 2182--2195, Apr. 2016.

\bibitem{Maghsudi2015}
S.~Maghsudi and S.~Stańczak, ``Joint channel allocation and power control for underlay {D2D} transmission,'' in \emph{Proc. IEEE Int. Conf. Commun (ICC)}, 2015, pp. 2091--2096.

\bibitem{Yu2011}
C.-H. Yu, K.~Doppler, C.~B. Ribeiro, and O.~Tirkkonen, ``Resource sharing optimization for device-to-device communication underlaying cellular networks,'' \emph{{IEEE} Trans. Wireless Commun.}, vol.~10, no.~8, pp. 2752--2763, Aug. 2011.

\bibitem{Jiang2018}
F.~Jiang, B.-c. Wang, C.-y. Sun, Y.~Liu, and X.~Wang, ``Resource allocation and dynamic power control for {D2D} communication underlaying uplink multi-cell networks,'' \emph{Wireless Netw.}, vol.~24, no.~2, pp. 549--563, Feb. 2018.

\bibitem{Naqvi2018}
S.~A.~R. Naqvi, H.~Pervaiz, S.~A. Hassan, L.~Musavian, Q.~Ni, M.~A. Imran, X.~Ge, and R.~Tafazolli, ``Energy-aware radio resource management in {D2D}-enabled multi-tier {HetNets},'' \emph{IEEE Access}, vol.~6, pp. 16\,610--16\,622, 2018.

\bibitem{Takshi2018}
H.~Takshi, G.~Doǧan, and H.~Arslan, ``Joint optimization of device to device resource and power allocation based on genetic algorithm,'' \emph{IEEE Access}, vol.~6, pp. 21\,173--21\,183, 2018.

\bibitem{Khuntia2019}
P.~Khuntia and R.~Hazra, ``{QOS} aware channel and power allocation scheme for {D2D} enabled cellular networks,'' \emph{Telecommunication Systems}, vol.~72, no.~4, pp. 543--554, 2019.

\bibitem{Gao2019}
H.~Gao, S.~Zhang, Y.~Su, and M.~Diao, ``Joint resource allocation and power control algorithm for cooperative {D2D} heterogeneous networks,'' \emph{IEEE Access}, vol.~7, pp. 20\,632--20\,643, 2019.

\bibitem{Tse2012}
M.~A. Maddah-Ali and D.~Tse, ``Completely stale transmitter channel state information is still very useful,'' \emph{{IEEE} Trans. Inf. Theory}, vol.~58, no.~7, pp. 4418--4431, July 2012.

\bibitem{Lee2014}
N.~Lee and R.~W. Heath, ``Space-time interference alignment and degree-of-freedom regions for the {MISO} broadcast channel with periodic {CSI} feedback,'' \emph{{IEEE} Trans. Inf. Theory}, vol.~60, no.~1, pp. 515--528, Jan. 2014.

\bibitem{Jafar2014}
S.~A. Jafar, ``Topological interference management through index coding,'' \emph{{IEEE} Trans. Inf. Theory}, vol.~60, no.~1, pp. 529--568, Jan. 2014.

\bibitem{Naderializadeh2015}
N.~Naderializadeh and A.~S. Avestimehr, ``Interference networks with no {CSIT}: Impact of topology,'' \emph{{IEEE} Trans. Inf. Theory}, vol.~61, no.~2, pp. 917--938, Feb. 2015.

\bibitem{Doumiati2019}
S.~Doumiati, M.~Assaad, and H.~A. Artail, ``A framework of topological interference management and clustering for {D2D} networks,'' \emph{{IEEE} Trans. Commun.}, vol.~67, no.~11, pp. 7856--7871, Nov. 2019.

\bibitem{Wang2017}
L.~Wang, H.~Tang, H.~Wu, and G.~L. Stüber, ``Resource allocation for {D2D} communications underlay in {R}ayleigh fading channels,'' \emph{{IEEE} Trans. Veh. Technol.}, vol.~66, no.~2, pp. 1159--1170, Feb. 2017.

\bibitem{Li2020}
R.~Li, P.~Hong, K.~Xue, M.~Zhang, and T.~Yang, ``Energy-efficient resource allocation for high-rate underlay {D2D} communications with statistical {CSI}: A one-to-many strategy,'' \emph{{IEEE} Trans. Veh. Technol.}, vol.~69, no.~4, pp. 4006--4018, Apr. 2020.

\bibitem{Xiao2020}
X.~Li, L.~Ma, Y.~Xu, and R.~Shankaran, ``Resource allocation for {D2D}-based {V2X} communication with imperfect {CSI},'' \emph{{IEEE} Internet Things J.}, vol.~7, no.~4, pp. 3545--3558, Apr. 2020.

\bibitem{Geng2015}
C.~Geng, N.~Naderializadeh, A.~S. Avestimehr, and S.~A. Jafar, ``On the optimality of treating interference as noise,'' \emph{{IEEE} Trans. Inf. Theory}, vol.~61, no.~4, pp. 1753--1767, Apr. 2015.

\bibitem{Wu2013}
X.~Wu, S.~Tavildar, S.~Shakkottai, T.~Richardson, J.~Li, R.~Laroia, and A.~Jovicic, ``{FlashLinQ}: A synchronous distributed scheduler for peer-to-peer {Ad Hoc} networks,'' \emph{{IEEE/ACM} Trans. Netw.}, vol.~21, no.~4, pp. 1215--1228, Aug. 2013.

\bibitem{Navid2014}
N.~Naderializadeh and A.~S. Avestimehr, ``{ITL}in{Q}: A new approach for spectrum sharing in device-to-device communication systems,'' \emph{{IEEE} J. Sel. Areas Commun.}, vol.~32, no.~6, pp. 1139--1151, June 2014.

\bibitem{Lee2015_2}
N.~Lee, R.~Tandon, and R.~W. Heath, ``Distributed space–time interference alignment with moderately delayed {CSIT},'' \emph{{IEEE} Trans. Wireless Commun.}, vol.~14, no.~2, pp. 1048--1059, Feb. 2015.

\bibitem{Lee2016}
N.~Lee, F.~Baccelli, and R.~W. Heath, ``Spectral efficiency scaling laws in dense random wireless networks with multiple receive antennas,'' \emph{{IEEE} Trans. Inf. Theory}, vol.~62, no.~3, pp. 1344--1359, Mar. 2016.

\bibitem{Begashaw2016}
S.~Begashaw, D.~H. Nguyen, and K.~R. Dandekar, ``Enhancing blind interference alignment with reinforcement learning,'' in \emph{Proc. IEEE GLOBECOM}, 2016, pp. 1--7.

\bibitem{Zhang2016}
X.~Zhang, M.~R. Nakhai, and W.~N. S.~F. Wan~Ariffin, ``A multi-armed bandit approach to distributed robust beamforming in multicell networks,'' in \emph{Proc. IEEE GLOBECOM}, 2016, pp. 1--6.

\bibitem{Kim2020}
D.~Kim, H.~V. Poor, and N.~Lee, ``Online learning to precode for {FDD} massive {MIMO} systems,'' in \emph{Proc. IEEE GLOBECOM Workshops}, 2020, pp. 1--6.

\bibitem{Kim2019}
R.~Kim, Y.~Kim, N.~Y. Yu, S.-J. Kim, and H.~Lim, ``Online learning-based downlink transmission coordination in ultra-dense millimeter wave heterogeneous networks,'' \emph{{IEEE} Trans. Wireless Commun.}, vol.~18, no.~4, pp. 2200--2214, Apr. 2019.

\bibitem{Auer2002}
P.~Auer, ``Using confidence bounds for exploitation-exploration trade-offs,'' \emph{J. Mach. Learn. Res.}, vol.~3, pp. 397--422, 2002.

\bibitem{Hashima2022}
S.~Hashima, M.~M. Fouda, S.~Sakib, Z.~M. Fadlullah, K.~Hatano, E.~M. Mohamed, and X.~Shen, ``Energy-aware hybrid {RF-VLC} multiband selection in {{D2D}} communication: A stochastic multi-armed bandit approach,'' \emph{{IEEE} Internet Things J.}, vol.~9, no.~18, pp. 18\,002--18\,014, Sep. 2022.

\bibitem{Ortiz2019}
A.~Ortiz, A.~Asadi, M.~Engelhardt, A.~Klein, and M.~Hollick, ``{CBM}o{S}: Combinatorial bandit learning for mode selection and resource allocation in {D2D} systems,'' \emph{{IEEE} J. Sel. Areas Commun.}, vol.~37, no.~10, pp. 2225--2238, Oct. 2019.

\bibitem{Gai2010}
Y.~Gai, B.~Krishnamachari, and R.~Jain, ``Learning multiuser channel allocations in cognitive radio networks: A combinatorial multi-armed bandit formulation,'' in \emph{Proc. IEEE DySPAN}, 2010, pp. 1--9.

\bibitem{hakami2022}
V.~Hakami, H.~Barghi, S.~Mostafavi, and Z.~Arefinezhad, ``A resource allocation scheme for {D2D} communications with unknown channel state information,'' \emph{Peer-to-Peer Networking and Applications}, vol.~15, no.~2, pp. 1189--1213, 2022.

\bibitem{Ontan2013}
S.~Ontan{\'o}n, ``The combinatorial multi-armed bandit problem and its application to real-time strategy games,'' in \emph{Proc. AIIDE}, vol.~9, no.~1, June 2013, pp. 58--64.

\bibitem{Hamdi2010}
K.~A. Hamdi, ``A useful lemma for capacity analysis of fading interference channels,'' \emph{{IEEE} Trans. Commun.}, vol.~58, no.~2, pp. 411--416, Feb. 2010.

\bibitem{Hier2016}
F.~Nielsen, ``Hierarchical clustering,'' \emph{Introduction to HPC with MPI for Data Science}, pp. 195--211, 2016.

\end{thebibliography}

\begin{IEEEbiography}[{\includegraphics[width=1in,height=1.25in,clip,keepaspectratio]{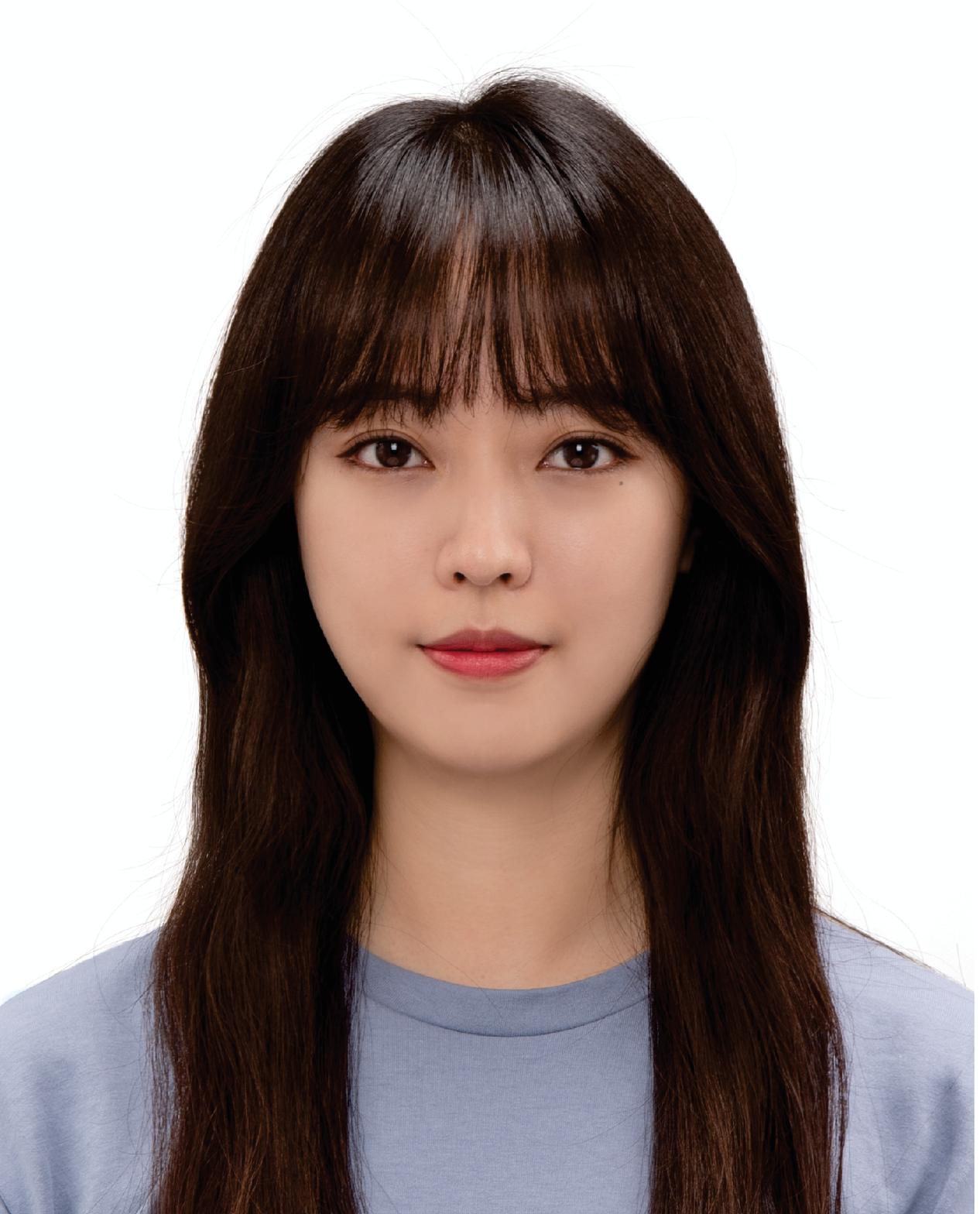}}]{Daeun Kim} (S'20) received the B.S. and M.S. degrees in electrical engineering from Pohang University of Science $\&$ Technology (POSTECH), Pohang, South Korea, in 2018 and 2020, respectively. She is currently pursuing the Ph.D. degree.
\end{IEEEbiography}

\begin{IEEEbiography}[{\includegraphics[width=1in,height=1.25in,clip,keepaspectratio]{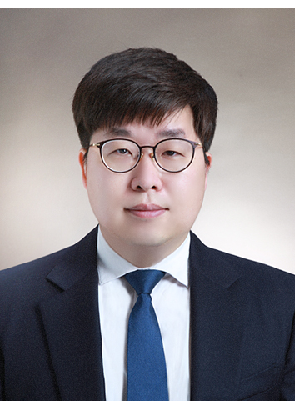}}]{Namyoon Lee} (Senior Member, IEEE) received the Ph.D. degree from The University of Texas at Austin in 2014. He was with the Communications and Network Research Group, Samsung Advanced Institute of Technology, South Korea, from 2008 to 2011, and also with the Wireless Communications Research, Intel Labs, Santa Clara, CA, USA, from 2015 to 2016. He was a Faculty Member with POSTECH from 2016 to 2022, and he is currently an Associate Professor with the School of Electrical Engineering, Korea University. His main research interests are communications, sensing, and machine learning. He was a recipient of the 2016 IEEE ComSoc Asia–Paciﬁc Outstanding Young Researcher Award, the 2020 IEEE Best YP Award (Outstanding Nominee), the 2021 IEEE-IEIE Joint Award for Young Engineer and Scientist, and the 2021 Headong Young Researcher Award. Since 2021, he has been an Associate Editor for IEEE Transactions on Wireless Communications and IEEE Transactions on Communications.
\end{IEEEbiography}

\end{document}